\newtheorem{definition}{Definition}
\newtheorem{example}{Example}
\newtheorem{proposition}{Proposition}
\newtheorem{proof}{Proof}
\newtheorem{remark}{Remark}
\newtheorem{problem}{Problem}
\newtheorem{lemma}{Lemma}
\begin{document}

\title{A scalable mining of frequent quadratic concepts in d-folksonomies}

\author{Mohamed Nader Jelassi and Sadok Ben Yahia and Engelbert Mephu Nguifo}

\maketitle


\begin{abstract}
\emph{Folksonomy} mining is grasping the interest of web $2$.$0$ community since it represents the core data of social resource sharing systems. However, a scrutiny of the related works interested in mining \emph{folksonomies} unveils that the time stamp dimension has not been considered. For example, the wealthy number of works dedicated to mining tri-concepts from \emph{folksonomies} did not take into account time dimension. In this paper, we will consider a \emph{folksonomy} commonly composed of triples $<$users, tags, resources$>$ and we shall consider the time as a new dimension. We motivate our approach by highlighting the battery of potential applications. Then, we present the foundations for mining quadri-concepts, provide a formal definition of the problem and introduce a new efficient algorithm, called \textsc{QuadriCons} for its solution to allow for mining \emph{folksonomies} in time, \emph{i.e.,} \emph{d-folksonomies}. We also introduce a new closure operator that splits the induced search space into equivalence classes whose smallest elements are the quadri-minimal generators. Carried out experiments on large-scale real-world datasets highlight good performances of our algorithm.
\end{abstract}

\begin{IEEEkeywords}
Quadratic Context; Formal Concept Analysis; Quadratic Concepts; \emph{Folksonomies}; Algorithm; Social Networks
\end{IEEEkeywords}
\IEEEpeerreviewmaketitle

\section{Introduction}\label{sec1}

\emph{Folksonomy} (from folk and taxonomy) is a neologism for a practice of collaborative categorization using freely chosen keywords \cite{Mika2005aInf}. Folksonomies (also called social tagging mechanisms) have been implemented in a number of online knowledge sharing environments since the idea was first adopted by the social bookmarking site \textsc{del.icio.us} in $2004$. The idea of a \emph{folksonomy} is to allow the users to describe a set of shared objects with a set of keywords, \emph{i.e.,} tags, of their own choice. The new data of \textit{folksonomy} systems provides a rich resource for data analysis, information retrieval, and knowledge discovery applications. The rise of \emph{folksonomies}, due to the success of the social resource sharing systems (\emph{e.g.,} \textsc{Flickr}, \textsc{Bibsonomy}, \textsc{Youtube}, etc.) also called Web $2$.$0$, has attracted interest of researchers to deal with the \emph{Folksonomy mining} area. However, due to the huge size of \emph{folksonomies}, many works focus on the extraction of lossless concise representations of interesting patterns, \emph{i.e.,} triadic concepts \cite{Jaschke08} \cite{Ji06cube} \cite{datapeelertkdd}.

Recently, in \cite{pakdd}, the new \textsc{TriCons} algorithm outperforms its competitors thanks to a clever sweep of the search space. Nevertheless, a scrutiny of these related work unveils that the time stamp dimension has not been considered yet.
Time is considered one of the most important factors in detecting emerging subjects. Agrawal and Srikant show in \cite{sequential} the importance of \emph{sequential patterns} which may be useful to discover rules integrating the notion of temporality and sequence of events. In our case, such rules shall be of the form : users which shared the movie "Alcatraz" using the tag \emph{prison} will shared it later with the tag \emph{escape}.

With this paper, we initiate the confluence of threee lines of research, \emph{Formal Concept Analysis}, \emph{Folksonomy mining} and \emph{Mining Sequential Patterns}. \emph{Formal Concept Analysis} (\emph{FCA}) \cite{Ganter99} has been extended since fifteen years ago to deal with three-dimensional data \cite{fritz1995triadic}. However, \emph{Triadic Concept Analysis} (\emph{TCA}) has not garnered much attention for researchers until the coming of \emph{folksonomies} as they represent the core data
 structure of \emph{social networks}. Thus, we give a formal definition of the problem of mining all frequent quadri-concepts (the four-dimensional and sequential version of mining all frequents tri-concepts) and introduce our algorithm \textsc{QuadriCons} for its solution, which is an extension of the \textsc{TriCons} algorithm to the quadratic case. We also introduce a new closure operator that splits the induced search space into equivalence classes whose smallest elements are the quadri-minimal generators (\emph{QGs}); \emph{QGs} are helpful for a clever sweep of the search space \cite{pakdd} \cite{mingen}.

The remainder of the paper is organized as follows. In the next section, we motivate our conceptual and temporal clustering approach for solving the problem of mining all frequent quadri-concepts of a given dataset. We thoroughly study the related work in Section \ref{sec3}. In Section \ref{sec4}, we provide a formal definition of the problem of mining all frequent quadri-concepts. We introduce a new closure operator for the quadratic context as well as the \textsc{QuadriCons} algorithm dedicated to the extraction of all frequent quadri-concepts, in Section \ref{sec5}. In Section \ref{sec6}, carried out experiments about performances of our algorithm in terms of execution time, consumed memory and compacity of the quadri-concepts. Finally, we conclude the paper with a summary and we sketch some avenues for future works in Section \ref{sec7}.


\section{Motivation : Conceptual and Temporal Clustering of Folksonomies}\label{sec2}

The immediate success of social networks, \emph{i.e.,} social resource sharing systems is due to the fact that no specific skills are needed for participating \cite{Jaschke08}. Each individual user is able to share a web page\footnote{http://del.icio.us}, a personal photo\footnote{http://flickr.com}, an artist he likes\footnote{http://last.fm} or a movie he watched\footnote{http://movielens.org} without much effort.

The core data structure of such systems is a \emph{folksonomy}. It consists of three sets $\mathcal{U}$, $\mathcal{T}$, $\mathcal{R}$ of users assigning tags to resources as well as a ternary relation $\mathrm{Y}$ between them. To allow conceptual and temporal clustering from \emph{folksonomies}, an additional dimension, \emph{i.e.,} $\mathcal{D}$, is needed : \textbf{time}.
Indeed, the special feature of \emph{folksonomies} under study is their unceasing evolution \cite{cattuto2007network}. Such systems follow trends and evolve according to the new user's taggings \cite{Hotho2006}. The increasing use of these systems shows that \emph{folksonomy}-based works are then able to offer a better solution in the domain of Web Information Retrieval (WIR) \cite{krause2008comparison} by considering time when dealing with a query or during the user's taggings, \emph{i.e.,} by suggesting the appropriate trendy tags.
Thus, a user which tagged a film or a website with a given tag at a specific date may assign a whole new tag at a different period under completely different circumstances. For example, a user that associate the website \emph{whitehouse.gov} with the tags \emph{Bush} and \emph{Iraq} in $2004$ might assign it the tags \emph{Obama} and \emph{crisis} nowadays. A more real and sadly true example leads users today associating \emph{Islam} with the tag \emph{terrorism} instead of \emph{Quran}; besides, one may see the incessant evolution of the tag \emph{Binladen} in social networks since September $2001$ \cite{Amitay}.

Within the new introduced dimension, \emph{i.e.,} time, our goal is to detect hidden sequential conceptualizations in \emph{folksonomies}. An exemple of such a concept is that users which tagged \emph{"Harry Potter"} will tag \emph{"The Prisoner of Azkaban"} and then tag \emph{"The Order of the Phoenix"}, probably with the same tags.

Our algorithm solves the problem of frequent closed patterns mining for this kind of data. It will return a set of (frequent) quadruples, where each quadruple ($U$, $T$, $R$, $D$) consists of a set $U$ of users, a set $T$ of tags, a set $R$ of resources and a set $D$ of dates. These quadruples, called \emph{(frequent) quadri-concepts}, have the property that each user in $U$ has tagged each resource in $R$ with all tags from $T$ at different dates from $D$, and that none of these  sets can be extended without shrinking one of the other three dimensions. Hence, they represent the four-dimensional and sequential extension of tri-concepts. Moreover, we can add minimum support constraints on each of the four dimensions in order to focus on the largest concepts of the \emph{folksonomy}, \emph{i.e.,} by setting higher values of minimum supports.

In the remainder, we will scrutinize the state-of-the-art propositions aiming to deal with the \emph{folksonomy mining} area.

\section{Related Work}\label{sec3}

In this section, we discuss the different works that deal with \emph{folksonomy} mining.
Due to their triadic form, many researchers \cite{Jaschke08} \cite{Ji06cube} \cite{datapeelertkdd} focus on \emph{folksonomies} in order to extract triadic concepts which are maximal sets of users, tags and resources. Tri-concepts are the first step to a various of applications : ontology building \cite{Mika2005aInf}, association rule derivation \cite{hothorules}, recommendation systems \cite{recommand} to cite but a few. Other papers focus on analysing the structure of folksonomies \cite{Golder06structureCollaborative} or structure the tripartite network of \emph{folksonomies} \cite{cattuto2007network}. Recent works analyse the \emph{folksonomy}'s evolution through time in order to discover the emergent subjects and follow trends \cite{Amitay} \cite{Hotho06trenddetection} \cite{Dubinko}.

Since we are going to mine quadri-concepts from \emph{d-folksonomies}, which mimic the structure of quadratic contexts, we look for works that deal with the four-dimensional data. In \cite{vout2}, inspired by work of Wille \cite{fritz1995triadic} extending \emph{Formal Concept Analysis} to three dimensions, the author created a framework for analyzing $n$-dimensional formal concepts. He generalized the triadic concept analysis to $n$ dimensions for arbitrary $n$, giving rise to Polyadic Concept Analysis. The $n$-adic contexts give rise, in a way analogous to the triadic case, to $n$-adic formal concepts. In \cite{vout2}, the author gives examples of quadratic concepts and their associated quadri-lattice. Despite robust theoretical study, no algorithm has been proposed by Voutsadakis for an efficient extraction of such $n$-adic concepts. Recently, Cerf \emph{et al.} proposed the \textsc{Data-Peeler} algorithm \cite{datapeelertkdd} in order to extract all closed concepts from $n$-ary relations. \textsc{Data-Peeler} enumerates all the $n$-adic formal concepts in a depth first manner using a binary tree enumeration strategy. When setting $n$ to $4$, \textsc{Data-Peeler} is able to extract quadri-concepts.

In the following, we give a formal definition of the problem of mining all frequent quadri-concepts as well as the main notions used through the paper.

\section{The Problem of Mining all Frequent Quadri-Concepts}\label{sec4}

In this section, we formalize the problem of mining all frequents quadri-concepts. We start with an adaptation of the notion of \emph{folksonomy} \cite{Jaschke08} to the quadratic context.
\begin{definition}\label{dfolk} (\textsc{D-Folksonomy})
A \emph{d-folksonomy} is a set of tuples $\mathbb{F}_{d}$ $=$ \textsc{(}$\mathcal{U}$, $\mathcal{T}$, $\mathcal{R}$, $\mathcal{D}$, $\mathrm{Y}$\textsc{)} where $\mathcal{U}$, $\mathcal{T}$, $\mathcal{R}$ and $\mathcal{D}$ are finite sets which elements are called users, tags, resources and dates. $\mathrm{Y}$ $\subseteq$ $\mathcal{U}$ $\times$ $\mathcal{T}$ $\times$ $\mathcal{R}$ $\times$ $\mathcal{D}$ represents a quaternary relation where each $y$ $\subseteq$ $\mathrm{Y}$ can be represented by a quadruple : $y$ = \{($u$, $t$, $r$, $d$) $\mid$ $u$ $\in$ $\mathcal{U}$, $t$ $\in$ $\mathcal{T}$, $r$ $\in$ $\mathcal{R}$, $d$ $\in$ $\mathcal{D}$\} which means that the user $u$ has annotated the resource $r$ using the tag $t$ at the date $d$.
\end{definition}

\begin{example}
Table \ref{quadri} depicts an example of a \textit{d-folksonomy} $\mathbb{F}_{d}$ with  $\mathcal{U}$ = \{$u_{1}$, $u_{2}$, $u_{3}$, $ u_{4}$\}, $\mathcal{T}$ = \{$t_{1}$, $t_{2}$, $t_{3}$\}, $\mathcal{R}$ = \{$r_{1}$, $r_{2}$\} and $\mathcal{D}$ = \{$d_{1}$, $d_{2}$\}. Each cross within the quaternary relation indicates a tagging operation by a user from $\mathcal{U}$, a tag from $\mathcal{T}$ and a resource from $\mathcal{R}$ at a date from $\mathcal{D}$, \textit{i.e.}, a user has tagged a particular resource with a particular tag at a date $d$. For example, the user $u_{1}$ has tagged the resource $r_{1}$ with the tags \textsl{$t_{1}$}, \textsl{$t_{2}$} and \textsl{$t_{3}$} at the date $d_{1}$.

\begin{table}[h]
\begin{center}
{\small
\begin{tabular}{|l|l||l|l|l||l|l|l|}
\hline\hline $\mathbb{F}_{d}$ & $\mathcal{R}$ & \multicolumn{3}{|c|}{\textsl{$r_{1}$}} &
\multicolumn{3}{|c|}{\textsl{$r_{2}$}}  \\
\hline
$\mathcal{D}$  & $\mathcal{U}$/$\mathcal{T}$ & \textsl{$t_{1}$}& \textsl{$t_{2}$}& \textsl{$t_{3}$}& \textsl{$t_{1}$}& \textsl{$t_{2}$}&
\textsl{$t_{3}$}\\

 \hline
      &  \textsl{$u_{1}$} &  $\times$     &       $\times$ &    $\times$      &    $\times$      &  $\times$   &   $\times$ \\
\hline
   \textsl{$d_{1}$}  &   \textsl{$u_{2}$}  &  $\times$     &       $\times$ &      &    $\times$      &  $\times$   &   \\
\hline
      &  \textsl{$u_{3}$} &     &       $\times$ &    $\times$      &        &  $\times$   &   $\times$\\
\hline
      &  \textsl{$u_{4}$}&  $\times$     &       $\times$ &    $\times$      &    $\times$      &  $\times$   &   $\times$\\

\hline
\hline
      &  \textsl{$u_{1}$} &  $\times$     &       $\times$ &        &         &  $\times$   &   \\
\hline
     \textsl{$d_{2}$}   &   \textsl{$u_{2}$} &  $\times$     &       $\times$ &        &         &  $\times$   &   \\
\hline
      &  \textsl{$u_{3}$} &       &       &    $\times$      &      &     &   $\times$\\
\hline
   &  \textsl{$u_{4}$}&       &       &    $\times$      &      &     &   $\times$\\

\hline
\end{tabular}}
\end{center}
\caption{A \textit{d-folksonomy}.}\label{quadri}
\end{table}

\end{example}

The following definition introduces a (frequent) quadri-set.

\begin{definition}\label{qtriset} (\textsc{A (Frequent) quadri-set})
 Let $\mathbb{F}_{d}$ = ($\mathcal{U}$, $\mathcal{T}$, $\mathcal{R}$, $\mathcal{D}$, $\mathrm{Y}$) be a \emph{d-folksonomy}. A quadri-set of $\mathbb{F}_{d}$ is a quadruple ($A$, $B$, $C$, $E$) with $A$ $\subseteq$ $\mathcal{U}$, $B$ $\subseteq$ $\mathcal{T}$, $C$ $\subseteq$ $\mathcal{R}$ and $E$ $\subseteq$
$\mathcal{D}$ such as $A$ $\times$ $B$ $\times$ $C$ $\times$ $E$ $\subseteq$ $\mathrm{Y}$.
\end{definition}

\emph{D-Folksonomies} have four dimensions which are completely symmetric. Thus, we can define minimum support thresholds on each dimension. Hence, the problem of mining frequent quadri-sets is then the following:

\begin{problem}\label{pb1} (\textbf{Mining all frequent quadri-sets})
 Let $\mathbb{F}_{d}$ = ($\mathcal{U}$, $\mathcal{T}$, $\mathcal{R}$, $\mathcal{D}$, $\mathrm{Y}$) be a \emph{d-folksonomy} and let \textit{$minsupp_{u}$}, \textit{$minsupp_{t}$}, \textit{$minsupp_{r}$} and \textit{$minsupp_{d}$} be (absolute) user-defined minimum thresholds. The task of mining all frequent quadri-sets consists in determining all quadri-sets ($A$, $B$, $C$, $E$) of $\mathbb{F}_{d}$ with $|$ $A$ $|$ $\geq$ \textit{$minsupp_{u}$}, $|$ $B$ $|$ $\geq$ \textit{$minsupp_{t}$}, $|$ $C$ $|$ $\geq$ \textit{$minsupp_{r}$} and $|$ $E$ $|$ $\geq$ \textit{$minsupp_{d}$}.
\end{problem}

Our thresholds are antimonotonic constraints : If ($A_{1}$, $B_{1}$, $C_{1}$, $E_{1}$) with $A_{1}$ being maximal for $A_{1}$ $\times$ $B_{1}$ $\times$ $C_{1}$ $\times$ $E_{1}$ $\subseteq$ $\mathrm{Y}$ is not \emph{u-frequent}\footnote{with regard to the dimension $\mathcal{U}$.} then all ($A_{2}$, $B_{2}$, $C_{2}$, $E_{2}$) with $B_{1}$ $\subseteq$ $B_{2}$, $C_{1}$ $\subseteq$ $C_{2}$ and $E_{1}$ $\subseteq$ $E_{2}$ are not \emph{u-frequent} either. The same holds symmetrically for the other three dimensions. In \cite{fritz1995triadic}, the authors demonstrate that above the two-dimensional case, the direct symmetry between monotonicity and antimonotonicity breaks. Thus, they introduced a lemma which results from the triadic Galois connection \cite{Biedermann97a} induced by a triadic context. In the following, we adapt that lemma to our quadratic case.

\begin{lemma}\label{lemme} (See also \cite{vout2}, Proposition $2$)
Let ($A_{1}$, $B_{1}$, $C_{1}$, $E_{1}$) and ($A_{2}$, $B_{2}$, $C_{2}$, $E_{2}$) be quadri-sets with $A_{i}$ being maximal for $A_{i}$ $\times$ $B_{i}$ $\times$ $C_{i}$ $\times$ $E_{i}$ $\subseteq$ $\mathrm{Y}$, for $i$ = $1$,$2$. If $B_{1}$ $\subseteq$ $B_{2}$, $C_{1}$ $\subseteq$ $C_{2}$ and $E_{1}$ $\subseteq$ $E_{2}$ then $A_{2}$ $\subseteq$ $A_{1}$. The same holds symmetrically for the other three dimensions. In the sequel, the inclusion ($A_{1}$, $B_{1}$, $C_{1}$, $E_{1}$) $\subseteq$ ($A_{2}$, $B_{2}$, $C_{2}$, $E_{2}$) holds if and only if $B_{1}$ $\subseteq$ $B_{2}$, $C_{1}$ $\subseteq$ $C_{2}$, $E_{1}$ $\subseteq$ $E_{2}$ and $A_{2}$ $\subseteq$ $A_{1}$.
\end{lemma}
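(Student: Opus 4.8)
The plan is to reduce the maximality hypothesis on $A_i$ to an explicit membership description, after which the conclusion becomes a one-line set-theoretic computation. First I would observe that ``$A_i$ is maximal for $A_i \times B_i \times C_i \times E_i \subseteq \mathrm{Y}$'' is equivalent to
\[
A_i = \{u \in \mathcal{U} \mid \{u\} \times B_i \times C_i \times E_i \subseteq \mathrm{Y}\}.
\]
Every $u \in A_i$ satisfies $\{u\} \times B_i \times C_i \times E_i \subseteq \mathrm{Y}$ because $(A_i, B_i, C_i, E_i)$ is a quadri-set; conversely, any $u$ with $\{u\} \times B_i \times C_i \times E_i \subseteq \mathrm{Y}$ could be adjoined to $A_i$ without violating the inclusion in $\mathrm{Y}$, so maximality forces $u \in A_i$ already. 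This is exactly the user-side derivation of the (quadratic) Galois connection induced by $\mathrm{Y}$, analogous to the triadic one of \cite{Biedermann97a}, and it is the sole place where the maximality assumption is used.

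Next I would fix an arbitrary $u \in A_2$ and prove $u \in A_1$. Since $(A_2, B_2, C_2, E_2)$ is a quadri-set, $\{u\} \times B_2 \times C_2 \times E_2 \subseteq \mathrm{Y}$. Combining this with the hypotheses $B_1 \subseteq B_2$, $C_1 \subseteq C_2$ and $E_1 \subseteq E_2$, monotonicity of the Cartesian product under inclusion gives
\[
\{u\} \times B_1 \times C_1 \times E_1 \subseteq \{u\} \times B_2 \times C_2 \times E_2 \subseteq \mathrm{Y}.
\]
By the membership characterization of $A_1$ above, this forces $u \in A_1$; as $u$ was arbitrary, $A_2 \subseteq A_1$. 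Note that only the maximality of $A_1$ is actually required, while for $(A_2, B_2, C_2, E_2)$ the plain quadri-set property suffices.

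The three symmetric statements (exchanging the distinguished dimension $\mathcal{U}$ with $\mathcal{T}$, $\mathcal{R}$ or $\mathcal{D}$) follow verbatim from the complete symmetry of the four dimensions noted just before the lemma, so I would not repeat the argument. The closing ``if and only if'' sentence is a definitional convention fixing the order on quadri-sets rather than a claim to be proved; I would simply observe that it is consistent with the antimonotone behaviour established above. I do not anticipate a genuine obstacle, since this is a direct quadratic transcription of the triadic lemma of \cite{fritz1995triadic}; the only step deserving care is turning the extremal maximality hypothesis into the usable closed-form characterization, everything afterwards being elementary monotonicity of Cartesian products under inclusion.
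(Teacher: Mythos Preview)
Your argument is correct and is the standard way to establish this antitonicity: rewriting maximality of $A_i$ as the closed-form description $A_i=\{u\in\mathcal{U}\mid\{u\}\times B_i\times C_i\times E_i\subseteq\mathrm{Y}\}$ and then using monotonicity of the Cartesian product. Your observation that only the maximality of $A_1$ (not of $A_2$) is needed is also accurate, and your treatment of the final sentence as a definition rather than a claim matches how the paper uses it downstream.

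There is no proof in the paper to compare against: the authors state the lemma as an adaptation of the triadic Galois-connection lemma and defer to Voutsadakis (Proposition~2 of \cite{vout2}) without supplying an argument of their own. Your write-up therefore fills a gap rather than paralleling or diverging from anything in the text.
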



\begin{example}
Let $\mathbb{F}_{d}$ be the \textit{d-folksonomy} of Table \ref{quadri} and let $S_{1}$ = \{\{$u_{3}$, $u_{4}$\}, $t_{3}$, \{$r_{1}$, $r_{2}$\}, \{$d_{1}$, $d_{2}$\}\} and $S_{2}$ = \{\{$u_{1}$, $u_{3}$, $u_{4}$\}, \{$t_{2}$, $t_{3}$\}, \{$r_{1}$, $r_{2}$\}, $d_{1}$\} be two quadri-sets of $\mathbb{F}_{d}$. Then, we have $S_{1}$ $\subseteq$ $S_{2}$ since \{$u_{3}$, $u_{4}$\} $\subseteq$ \{$u_{1}$, $u_{3}$, $u_{4}$\}, $t_{3}$ $\subseteq$ \{$t_{2}$, $t_{3}$\}, \{$r_{1}$, $r_{2}$\} $\subseteq$ \{$r_{1}$, $r_{2}$\} and $d_{1}$ $\subseteq$ \{$d_{1}$, $d_{2}$\}.
\end{example}

As the set of all frequent quadri-sets is highly redundant, we consider a specific condensed representation, \textit{i.e.,} a subset which contains the same information : the set of all frequent quadri-concepts. The latter's definition is given as follows :

\begin{definition}\label{qc} (\textsc{(Frequent) quadratic concept})
A quadratic concept (or a quadri-concept for short) of a \emph{d-folksonomy} $\mathbb{F}_{d}$ = \textsc{(}$\mathcal{U}$, $\mathcal{T}$, $\mathcal{R}$, $\mathcal{D}$, $\mathrm{Y}$\textsc{)} is a quadruple ($U$, $T$, $R$, $D$) with $U$ $\subseteq $ $\mathcal{U}$, $T$ $\subseteq$ $\mathcal{T}$, $R$ $\subseteq$ $\mathcal{R}$ and $D$ $\subseteq$ $\mathcal{D}$ with $U$ $\times$ $T$ $\times$ $R$ $\times$ $D$ $\subseteq$ $\mathrm{Y}$ such that the quadruple ($U$,
 $T$, $R$, $D$) is maximal, \emph{i.e.,} none of these sets can be extended without shrinking one of the other three dimensions. A quadri-concept is said to be frequent whenever it is a frequent quadri-set.
\end{definition}

\begin{problem}\label{pb2} (\textbf{Mining all frequent quadri-concepts})
 Let $\mathbb{F}_{d}$ = ($\mathcal{U}$, $\mathcal{T}$, $\mathcal{R}$, $\mathcal{D}$, $\mathrm{Y}$) be a \emph{d-folksonomy} and let \textit{$minsupp_{u}$}, \textit{$minsupp_{t}$}, \textit{$minsupp_{r}$} and \textit{$minsupp_{d}$} be user-defined minimum thresholds. The task of mining all frequent quadri-concepts consists in determing all quadri-concepts ($U$, $T$, $R$, $D$) of $\mathbb{F}_{d}$ with $|$ $U$ $|$ $\geq$ \textit{$minsupp_{u}$}, $|$ $T$ $|$ $\geq$ \textit{$minsupp_{t}$}, $|$ $R$ $|$ $\geq$ \textit{$minsupp_{r}$} and $|$ $D$ $|$ $\geq$ \textit{$minsupp_{d}$}. The set of all frequent quadri-concepts of $\mathbb{F}_{d}$ is equal to $\mathcal{QC}$ = \{$qc$ $\mid$ $qc$ = ($U$, $T$, $R$, $D$) is a frequent quadri-concept\}.
\end{problem}

\begin{remark}
It is important to note that the extracted representation of quadri-concepts is information lossless. Hence, after solving \emph{Problem} \ref{pb2}, we can easily solve the \emph{Problem} \ref{pb1} by enumerating all quadri-sets ($A$, $B$, $C$, $E$) such as it exists a frequent quadri-concept ($U$, $T$, $R$, $D$) such as $A$ $\subseteq$ $U$, $B$ $\subseteq$ $T$, $C$ $\subseteq$ $R$, $E$ $\subseteq$ $D$ and $|$ $A$ $|$ $\geq$ \textit{$minsupp_{u}$}, $|$ $B$ $|$ $\geq$ \textit{$minsupp_{t}$}, $|$ $C$ $|$ $\geq$ \textit{$minsupp_{r}$} and $|$ $E$ $|$ $\geq$ \textit{$minsupp_{d}$}.
\end{remark}

In the following, we introduce the \textsc{QuadriCons} algorithm for mining all frequent quadri-Concepts before discussing its performances versus the \textsc{Data-Peeler} algorithm for the quadratic case in the section after.


\section{The \textsc{QuadriCons} Algorithm for Mining all Frequent Quadri-Concepts}\label{sec5}
In this section, we introduce new notions that would be of use throughout the \textsc{QuadriCons} algorithm. Hence, we introduce a new closure operator for a \emph{d-folksonomy} which splits the search space into equivalence classes as well as an extension of the notion of minimal generator \cite{pakdd}. Then, we provide an illustrative example of our algorithm.

\subsection{Main notions of \textsc{QuadriCons}}
Before introducing our closure operator for a \emph{d-folksonomy}/quadratic context, we define a general definition of a closure operator for a $n$-adic context.

\begin{definition}\label{opferm-naire}\textsc{(Closure Operator of a $n$-adic context)}
Let $S$ = ($S_{1}$, $S_{2}$, $\ldots$, $S_{n}$) be a $n$-set, with $S_{1}$ being maximal for $S_{1}$ $\times$ $\ldots$ $\times$ $S_{n}$ $\subseteq$ $\mathrm{Y}$, of a $n$-adic context $\mathbb{K}^{n}$ with $n$ dimensions, \emph{i.e.,} $\mathbb{K}^{n}$ = ($\mathcal{D}_{1}$, $\mathcal{D}_{2}$, $\ldots$, $\mathcal{D}_{n}$, $\mathrm{Y}$).
A mapping \textit{h} is defined as follows :

\textit{h}($S$) = \textit{h}($S_{1}$, $S_{2}$, $\ldots$, $S_{n}$) = ($C_{1}$, $C_{2}$, $\ldots$, $C_{n}$) such as :

$C_{1}$ = $S_{1}$

$\wedge$ $C_{2}$ = \{$C_{2}^{i}$ $\in$ $\mathcal{D}_{2}$ $\mid$ ($c_{1}^{i}$, $C_{2}^{i}$, $c_{3}^{i}$, $\ldots$, $c_{n}^{i}$) $\in$
$\mathrm{Y}$ $\forall$ $c_{1}^{i}$ $\in$ $C_{1}$, $\forall$ $c_{3}^{i}$ $\in$ $S_{3}$, $\ldots$, $\forall$ $c_{n}^{i}$ $\in$ $S_{n}$\}

$\vdots$

$\wedge$ $C_{n}$ = \{$C_{n}^{i}$ $\in$ $\mathcal{D}_{n}$ $\mid$ ($c_{1}^{i}$, $c_{2}^{i}$, $\ldots$, $c_{n-1}^{i}$, $C_{n}^{i}$) $\in$
$\mathrm{Y}$ $\forall$ $c_{1}^{i}$ $\in$ $C_{1}$, $\ldots$ $\forall$ $c_{n-1}^{i}$ $\in$ $C_{n-1}$\}
\end{definition}

\begin{proposition}
\label{propclosure}
$h$ is a closure operator.
\end{proposition}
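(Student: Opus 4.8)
The plan is to verify the three defining axioms of a closure operator with respect to the partial order of Lemma~\ref{lemme} (first coordinate reversed, the remaining $n-1$ coordinates ordered by ordinary inclusion): extensivity ($S \subseteq h(S)$), idempotence ($h(h(S)) = h(S)$), and monotonicity ($S \subseteq T \Rightarrow h(S) \subseteq h(T)$). Throughout I would lean on one structural invariant produced by the cascade: after the first $k$ components have been computed, $C_1 \times \cdots \times C_k \times S_{k+1} \times \cdots \times S_n \subseteq \mathrm{Y}$, and $C_k$ is by construction the largest subset of $\mathcal{D}_k$ for which this holds given $C_1,\dots,C_{k-1}$ and $S_{k+1},\dots,S_n$. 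Taking $k=n$ yields the useful fact $C_1 \times \cdots \times C_n \subseteq \mathrm{Y}$.

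For extensivity I would argue by induction on $k$. Since $C_1 = S_1$ we have $C_1 \subseteq S_1$, which is exactly the (reversed) requirement on the first coordinate. For $j \geq 2$, assuming the invariant $C_1 \times \cdots \times C_{j-1} \times S_j \times \cdots \times S_n \subseteq \mathrm{Y}$, every $s \in S_j$ satisfies the membership condition defining $C_j$, because the tuple $(c_1,\dots,c_{j-1},s,s_{j+1},\dots,s_n)$ lies in the above box and hence in $\mathrm{Y}$; thus $S_j \subseteq C_j$, and $S \subseteq h(S)$ follows.

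For idempotence I would first check that $h(S)=(C_1,\dots,C_n)$ is again a legitimate input, i.e. that $C_1$ is maximal for $C_1 \times \cdots \times C_n \subseteq \mathrm{Y}$: any $u$ adjoinable to $C_1$ would, since $C_j \supseteq S_j$, also be adjoinable to $S_1$ against $S_2,\dots,S_n$, contradicting the maximality of $S_1$. Applying $h$ a second time, the fact that $C_1 \times \cdots \times C_n \subseteq \mathrm{Y}$ shows that each recomputed component $C_k'$ contains $C_k$, while replacing $S_{k+1},\dots,S_n$ by the larger $C_{k+1},\dots,C_n$ only tightens the defining condition, so $C_k' \subseteq C_k$; together these give $C_k' = C_k$ for every $k$, hence $h(h(S)) = h(S)$.

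The step I expect to be the main obstacle is monotonicity, and the difficulty is intrinsic to the cascade. The component $C_k$ is evaluated against the already-closed coordinates $C_1,\dots,C_{k-1}$ and against the \emph{raw} coordinates $S_{k+1},\dots,S_n$, and under the order of Lemma~\ref{lemme} these two groups vary in \emph{opposite} directions when one passes from $S$ to a larger $T$: the early closed coordinates relax the defining constraint, whereas the later raw coordinates strengthen it. A naive ``more constraints versus fewer constraints'' comparison therefore does not settle $C_j \subseteq D_j$, and I would have to bring in the maximality of the first coordinate together with the antitone Galois connection of Lemma~\ref{lemme} to control the interaction between the two groups. This is the place where I would invest the most care, and the one most likely to demand an additional hypothesis or a restriction of the domain on which $h$ is declared a closure operator.
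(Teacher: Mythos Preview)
Your three–step plan (extensivity, idempotency, isotony with respect to the order of Lemma~\ref{lemme}) is exactly the paper's. Your extensivity argument via the running invariant $C_1\times\cdots\times C_k\times S_{k+1}\times\cdots\times S_n\subseteq\mathrm{Y}$ is in fact cleaner than what the paper writes: the paper justifies $S_n\subseteq C_n$ by ``since $C_1=S_1,\ C_2\supseteq S_2,\ldots,C_{n-1}\supseteq S_{n-1}$'', which on its face points the wrong way (enlarging the earlier coordinates can only \emph{tighten} the defining condition for $C_n$); it is precisely your invariant that makes that step legitimate. Your idempotency sketch likewise matches the paper's conclusion.

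Where you hesitate---monotonicity---the paper does not. It simply asserts $C_2\subseteq C'_2$ ``since $S_i\subseteq S'_i$ for $i=3,\ldots,n$ and $C'_1\subseteq C_1$'', with a bare citation of Lemma~\ref{lemme}, and then propagates this to the remaining coordinates. Your diagnosis that these two hypotheses push the defining constraint in \emph{opposite} directions is exactly right, and the paper supplies no further argument to reconcile them. In fact the step does not go through. Take $n=3$, $\mathcal D_1=\{a\}$, $\mathcal D_2=\{p,q\}$, $\mathcal D_3=\{x,y\}$, and $\mathrm Y=\{(a,p,x),(a,q,x),(a,p,y)\}$. Then $S=(\{a\},\{p\},\{x\})$ and $S'=(\{a\},\{p\},\{x,y\})$ are both admissible inputs (first coordinate maximal) with $S\subseteq S'$, yet $h(S)=(\{a\},\{p,q\},\{x\})$ and $h(S')=(\{a\},\{p\},\{x,y\})$ are incomparable in the order of Lemma~\ref{lemme}. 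So your caution is fully warranted: isotony, as stated for that order, fails in general, and the paper's proof does not resolve the very obstacle you identified.
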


\begin{proof}
To prove that \emph{h} is a closure operator, we have to prove that this closure operator fulfils the three properties of \textbf{extensivity}, \textbf{idempotency} and \textbf{isotony} \cite{closure}.


($1$) \textbf{Extensivity} :
Let $S$ = ($S_{1}$, $S_{2}$, $\ldots$, $S_{n}$) be a $n$-set of  $\mathbb{K}^{n}$ $\Rightarrow$ \textit{h}($S$) = ($C_{1}$, $C_{2}$, $\ldots$, $C_{n}$) such that : $C_{1}$ = $S_{1}$, $C_{2}$ = \{$C_{2}^{i}$ $\in$ $\mathcal{D}_{2}$ $\mid$ ($c_{1}^{i}$, $C_{2}^{i}$, $c_{3}^{i}$, $\ldots$, $c_{n}^{i}$) $\in$
$\mathrm{Y}$ $\forall$ $c_{1}^{i}$ $\in$ $C_{1}$, $\forall$ $c_{3}^{i}$ $\in$ $S_{3}$, $\ldots$, $\forall$ $c_{n}^{i}$ $\in$ $S_{n}$\} $\supseteq$ $S_{2}$ since $C_{1}$ = $S_{1}$ , $\ldots$, $C_{n}$ = \{$C_{n}^{i}$ $\in$ $\mathcal{D}_{n}$ $\mid$ ($c_{1}^{i}$, $c_{2}^{i}$, $\ldots$, $c_{n-1}^{i}$, $C_{n}^{i}$) $\in$ $\mathrm{Y}$ $\forall$ $c_{1}^{i}$ $\in$ $C_{1}$, $\forall$ $c_{2}^{i}$ $\in$ $C_{2}$, $\ldots$ $\forall$ $c_{n-1}^{i}$ $\in$ $C_{n-1}$\} $\supseteq$ $S_{n}$ since $C_{1}$ = $S_{1}$, $C_{2}$ $\supseteq$ $S_{2}$, $\ldots$, $C_{n-1}$ $\supseteq$ $S_{n-1}$.
Then, $C_{1}$ = $S_{1}$ and $S_{i}$ $\subseteq$ $C_{i}$ for $i$ = $2$, $\ldots$ $n$ $\Rightarrow$ $S$ $\subseteq$ $\textit{h}$($S$) (\emph{cf.}, Lemma \ref{lemme})
\\

($2$) \textbf{Idempotency} :
Let $S$ = ($S_{1}$, $S_{2}$, $\ldots$, $S_{n}$) be a $n$-set of $\mathbb{K}^{n}$ $\Rightarrow$ \textit{h}($S$) = ($C_{1}$, $C_{2}$, $\ldots$, $C_{n}$) $\Rightarrow$ \textit{h}($C_{1}$, $C_{2}$, $\ldots$, $C_{n}$) = ($C'_{1}$, $C'_{2}$, $\ldots$, $C'_{n}$) such that : $C'_{1}$ = $C_{1}$, $C'_{2}$ = \{$C_{2}^{i'}$ $\in$ $\mathcal{D}_{2}$ $\mid$ ($c_{1}^{i}$, $C_{2}^{i'}$, $c_{3}^{i}$, $\ldots$, $c_{n}^{i}$) $\in$ $\mathrm{Y}$ $\forall$ $c_{1}^{i}$ $\in$ $C_{1}$, $\forall$ $c_{3}^{i}$ $\in$ $S_{3}$, $\ldots$, $\forall$ $c_{n}^{i}$ $\in$ $S_{n}$\} = $C_{2}$ since $C_{1}$ = $S_{1}$, $\ldots$, $C'_{n}$ = \{$C_{n}^{i'}$ $\in$ $\mathcal{D}_{n}$ $\mid$ ($c_{1}^{i'}$, $c_{2}^{i'}$, $\ldots$, $c_{n-1}^{i'}$, $C_{n}^{i'}$) $\in$ $\mathrm{Y}$ $\forall$ $c_{1}^{i'}$ $\in$ $C'_{1}$, $\forall$ $c_{2}^{i'}$ $\in$ $C'_{2}$, $\ldots$ $\forall$ $c_{n-1}^{i'}$ $\in$ $C'_{n-1}$\} = $C_{n}$ since we have $C'_{1}$ = $C_{1}$, $C'_{2}$ = $C_{2}$, $\ldots$, $C'_{n-1}$ = $C_{n-1}$.
Then, $C'_{i}$ = $C_{i}$ for $i$ = $1$, $\ldots$ $n$  $\Rightarrow$ $\textit{h}(\textit{h}(S))$ = $\textit{h}(S)$
\\

($3$) \textbf{Isotony} :
Let $S$ = ($S_{1}$, $S_{2}$, $\ldots$, $S_{n}$) and $S'$ = ($S'_{1}$, $S'_{2}$, $\ldots$, $S'_{n}$) be two $n$-sets of $\mathbb{K}^{n}$ with $S$ $\subseteq$ $S'$, i.e., $S'_{1}$ $\subseteq$ $S_{1}$ and $S_{i}$ $\subseteq$ $S'_{i}$ for $i$ = $2$, $\ldots$ $n$ (\emph{cf.}, Lemma \ref{lemme}). We have \textit{h}($S$) = ($C_{1}$, $C_{2}$, $\ldots$, $C_{n}$) and \textit{h}($S'$) = ($C'_{1}$, $C'_{2}$, $\ldots$, $C'_{n}$) such that :
\begin{itemize}
  \item $C_{1}$ = $S_{1}$, $C'_{1}$ = $S'_{1}$ and $S'_{1}$ $\subseteq$ $S_{1}$ $\Rightarrow$ $C'_{1}$ $\subseteq$ $C_{1}$

 \item $C_{2}$ = \{$C_{2}^{i}$ $\in$ $\mathcal{D}_{2}$ $\mid$ ($c_{1}^{i}$, $C_{2}^{i}$, $c_{3}^{i}$, $\ldots$, $c_{n}^{i}$) $\in$
$\mathrm{Y}$ $\forall$ $c_{1}^{i}$ $\in$ $C_{1}$, $\forall$ $c_{3}^{i}$ $\in$ $S_{3}$, $\ldots$, $\forall$ $c_{n}^{i}$ $\in$ $S_{n}$\} and $C'_{2}$ = \{$C_{2}^{i'}$ $\in$ $\mathcal{D}_{2}$ $\mid$ ($c_{1}^{i}$, $C_{2}^{i'}$, $c_{3}^{i}$, $\ldots$, $c_{n}^{i}$) $\in$ $\mathrm{Y}$ $\forall$ $c_{1}^{i}$ $\in$ $C_{1}$, $\forall$ $c_{3}^{i}$ $\in$ $S_{3}$, $\ldots$, $\forall$ $c_{n}^{i}$ $\in$ $S_{n}$\} $\Rightarrow$  $C_{2}$ $\subseteq$ $C'_{2}$ since $S_{i}$ $\subseteq$ $S'_{i}$ for $i$ = $3$, $\ldots$ $n$ and $C'_{1}$ $\subseteq$ $C_{1}$. (\emph{cf.}, Lemma \ref{lemme})

  $\vdots$

  \item $C_{n}$ = \{$C_{n}^{i}$ $\in$ $\mathcal{D}_{n}$ $\mid$ ($c_{1}^{i}$, $c_{2}^{i}$, $\ldots$, $c_{n-1}^{i}$, $C_{n}^{i}$) $\in$
$\mathrm{Y}$ $\forall$ $c_{1}^{i}$ $\in$ $C_{1}$, $\forall$ $c_{2}^{i}$ $\in$ $C_{2}$, $\ldots$ $\forall$ $c_{n-1}^{i}$ $\in$ $C_{n-1}$\} and $C'_{n}$ = \{$C_{n}^{i'}$ $\in$ $\mathcal{D}_{n}$ $\mid$ ($c_{1}^{i'}$, $c_{2}^{i'}$, $\ldots$, $c_{n-1}^{i'}$, $C_{n}^{i'}$) $\in$ $\mathrm{Y}$ $\forall$ $c_{1}^{i'}$ $\in$ $C'_{1}$, $\forall$ $c_{2}^{i'}$ $\in$ $C'_{2}$, $\ldots$ $\forall$ $c_{n-1}^{i'}$ $\in$ $C'_{n-1}$\} $\Rightarrow$ $C_{n}$ $\subseteq$ $C'_{n}$ since $C'_{1}$ $\subseteq$ $C_{1}$, $C_{2}$ $\subseteq$ $C'_{2}$, $\ldots$, $C_{n-1}$ $\subseteq$ $C'_{n-1}$. (\emph{cf.}, Lemma \ref{lemme})
\end{itemize}

Then, $C'_{1}$ $\subseteq$ $C_{1}$ and $C_{i}$ $\subseteq$ $C'_{i}$ for $i$ = $2$, $\ldots$ $n$ $\Rightarrow$ $\textit{h}(S)$ $\subseteq$ $\textit{h}(S')$.\\
According to ($1$), ($2$) and ($3$), $\textit{h}$ is a closure operator.
\end{proof}

For $n$ = $4$, we instantiate the closure operator of a quadratic context, \emph{i.e.,} a \emph{d-folksonomy} as follows :

\begin{definition}\label{opferm} \textsc{(Closure operator of a \emph{d-folksonomy})}
Let $S$ = ($A$, $B$, $C$, $E$) be a quadri-set of $\mathbb{F}_{d}$ with $A$ being maximal for $A$ $\times$ $B$ $\times$ $C$ $\times$ $E$ $\subseteq$ $\mathrm{Y}$. The closure operator $h$ of a \emph{d-folksonomy} $\mathbb{F}_{d}$ is defined as follows:

\textit{h}($S$) = \textit{h}($A$, $B$, $C$, $E$) = ($U$, $T$, $R$, $D$)
$|$ $U$ = $A$

$\wedge$ $T$ = \{$t_{i}$ $\in$
$\mathcal{T}$ $\mid$ ($u_{i}$, $t_{i}$, $r_{i}$, $d_{i}$) $\in$
$\mathrm{Y}$ $\forall$ $u_{i}$ $\in$ $U$,
$\forall$ $r_{i}$ $\in$ $C$, $\forall$ $d_{i}$ $\in$ $E$\}

$\wedge$ $R$ = \{$r_{i}$ $\in$
$\mathcal{R}$ $\mid$ ($u_{i}$, $t_{i}$, $r_{i}$, $d_{i}$) $\in$
$\mathrm{Y}$ $\forall$ $u_{i}$ $\in$ $U$,
$\forall$ $t_{i}$ $\in$ $T$, $\forall$ $d_{i}$ $\in$ $E$\}

$\wedge$ $D$ = \{$d_{i}$ $\in$
$\mathcal{D}$ $\mid$ ($u_{i}$, $t_{i}$, $r_{i}$, $d_{i}$) $\in$
$\mathrm{Y}$ $\forall$ $u_{i}$ $\in$ $U$,
$\forall$ $t_{i}$ $\in$ $T$, $\forall$ $r_{i}$ $\in$ $R$\}
\end{definition}

\begin{remark}
Roughly speaking, \textit{h}($S$) computes the largest quadri-set in the \textit{d-folksonomy} $\mathbb{F}_{d}$ which contains maximal sets of tags, resources and dates shared by a group of users. The application of the closure operator $h$ on a quadri-set gives rise to a \emph{quadri-concept} $qc$ = ($U$, $T$, $R$, $D$). In the remainder of the paper, the $U$, $R$, $T$ and $D$ parts are respectively called \textbf{\textit{Extent}}, \textbf{\textit{Intent}}, \textbf{\emph{Modus}} and \textbf{\textit{Variable}}.
\end{remark}


Like the dyadic and triadic case, the closure operator splits the search space into equivalence classes, that we introduce in the following :

\begin{definition}\label{ce} \textsc{\textsc{(}Equivalence class\textsc{)}}
Let $S_{1}$ = ($A_{1}$, $B_{1}$, $C_{1}$, $E_{1}$), $S_{2}$ = ($A_{2}$, $B_{2}$, $C_{2}$, $E_{2}$) be two quadri-sets of $\mathbb{F}_{d}$ and $qc$ be a frequent quadri-concept. $S_{1}$ and $S_{2}$ belong to the same equivalence class represented by the quadri-concept $qc$, \textit{i.e.}, $S_{1}$
$\equiv_{qc}$ $S_{2}$  \textit{iff}  \textit{h}($S_{1}$) = \textit{h}($S_{2}$) = $qc$.
\end{definition}

\begin{figure}[htbp]
\begin{center}
\includegraphics[scale=0.4]{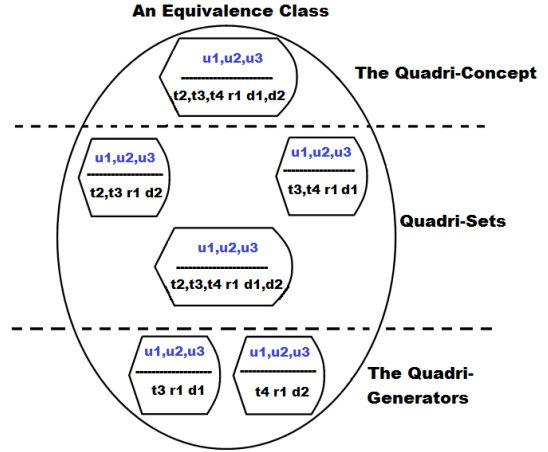}
\end{center}
\caption{Example of an equivalence class extracted from the \emph{d-folksonomy} depicted by Table \ref{quadri}}\label{classeq}
\end{figure}

Minimal Generators (\emph{MGs}) have been shown to play an important role in many theoretical and practical problem settings involving closure systems. Such minimal generators can offer a complementary and simpler way to understand the concept, because they may contain far fewer attributes than closed concepts. Indeed, \emph{MGs} represent the smallest elements within an equivalence class. Complementary to closures, minimal generators provide a way to characterize formal concepts \cite{mingen}. In the following, we introduce an extension of the definition of a \emph{MG} to the \emph{d-folksonomy}.

\begin{definition}\label{min gen} (\textsc{Quadri-Minimal generator})
Let $g$ = ($A$, $B$, $C$, $E$) be a quadri-set of $\mathbb{F}_{d}$ such as $A$ $\subseteq$ $\mathcal{U}$, $B$ $\subseteq$ $\mathcal{T}$, $C$ $\subseteq$ $\mathcal{R}$ and $E$ $\subseteq$ $\mathcal{D}$ and $qc$ $\in$ $\mathcal{QC}$. The quadruple $g$ is a quadri-minimal generator (quadri-generator for short) of $qc$ \textit{iff} \textit{h}($g$) = $qc$ and $\nexists$ $g_{1}$ = ($A_{1}$, $B_{1}$, $C_{1}$, $E_{1}$) such as :
\begin{enumerate}
  \item $A$ = $A_{1}$,
  \item ($B_{1}$ $\subseteq$ $B$ $\wedge$ $C_{1}$ $\subseteq$ $C$ $\wedge$ $E_{1}$ $\subset$ $E$) $\vee$ ($B_{1}$ $\subseteq$ $B$ $\wedge$ $C_{1}$ $\subset$ $C$ $\wedge$ $E_{1}$ $\subseteq$ $E$), and
  \item \textit{h}($g$) = \textit{h}($g_{1}$) = $qc$.
\end{enumerate}

\end{definition}

\begin{example}
Let us consider the \emph{d-folksonomy} $\mathbb{bF}_{d}$ shown in Table \ref{quadri}. Figure \ref{classeq} shows an example of
an equivalence class. For example, we have \textit{h}($g_{1}$=\{\{$u_{1}$, $u_{2}$, $u_{3}$\}, $t_{3}$, $r_{1}$, $d_{1}$\}) = \{\{$u_{1}$, $u_{2}$, $u_{3}$\}, \{$t_{2}$, $t_{3}$, $t_{4}$\}, $r_{1}$, \{$d_{1}$, $d_{2}$\}\} = $qc$ such as $g_{1}$ is a quadri-generator. Thus, $qc$ is the quadri-concept of this equivalence class which is the largest unsubsumed quadri-set and it has two quadri-generators. However, $g_{3}$ = \{\{$u_{1}$, $u_{2}$, $u_{3}$\}, \{$t_{3}$, $t_{4}$\}, $r_{1}$, $d_{1}$\} is not a quadri-generator of $qc$ since it exists $g_{1}$ such as $g_{1}$.\emph{extent}=$g_{3}$.\emph{extent}, $g_{1}$.\emph{intent} = $g_{3}$.\emph{intent} $\wedge$ $g_{1}$.\emph{modus} $\subset$ $g_{3}$.\emph{modus} $\wedge$ $g_{1}$.\emph{variable} = $g_{3}$.\emph{variable}.
\end{example}

Based on those new introduced notions, we propose in the following our new \textsc{QuadriCons} algorithm for a scalable mining of frequent quadri-concepts from a \textit{d-folksonomy}.
\subsection{The \textsc{QuadriCons} Algorithm}

In the following, we introduce a test-and-generate algorithm, called \textsc{QuadriCons}, for mining frequent quadri-concepts from a \textit{d-folksonomy}.
Since quadri-generators are minimal keys of an equivalence class, their detection is largely eased. \textsc{QuadriCons} operates in four steps as follows : the \textsc{FindMinimalGenerators} procedure as a first step for the extraction of quadri-generators. Then, the \textsc{ClosureCompute} procedure is invoked for the three next steps in order to compute respectively the \textit{modus}, \textit{intent} and \textit{variable} parts of quadri-concepts. The pseudo code of the \textsc{QuadriCons} algorithm is sketched by Algorithm \ref{QuadriCons}. \textsc{QuadriCons} takes as input a \textit{d-folksonomy} $\mathbb{F}_{d}$ = ($\mathcal{U}$, $\mathcal{T}$, $\mathcal{R}$, $\mathcal{D}$, $\mathrm{Y}$) as well as four user-defined thresholds (one for each dimension) : \textit{$minsupp_u$}, \textit{$minsupp_t$}, \textit{$minsupp_r$} and \textit{$minsupp_d$}. The output of the \textsc{QuadriCons} algorithm is the set of all frequent quadri-concepts that fulfil these thresholds. \textsc{QuadriCons} works as follows : it starts by invoking the \textsc{FindMinimalGenerators} procedure (Step $1$), which pseudo-code is given by Algorithm \ref{findmg}, in order to extract the quadri-generators stored in the set $\mathcal{MG}$ (Line $3$). For such extraction, \textsc{FindMinimalGenerators} computes for each triple ($t$, $r$, $d$) the set $U_{s}$ representing the maximal set of users sharing both tag $t$ and resource $r$ at the date $d$ (Algorithm \ref{findmg}, Line $3$). If $|U_{s}|$ is frequent \emph{w.r.t} \textit{$minsupp_u$} (Line $4$), a quadri-generator is then created (if it does not already exist) with the appropriate fields (Line $5$). Algorithm \ref{findmg} invokes the \textbf{\textsc{AddQuadri}} function which adds the quadri-generator $g$ to the set $\mathcal{MG}$ (Line $7$).

\begin{algorithm}[htbp]\label{monalgo}
{

\KwData{\begin{enumerate}   \item $\mathbb{F}_{d}$ ($\mathcal{U}$, $\mathcal{T}$, $\mathcal{R}$, $\mathcal{D}$, $\mathrm{Y}${)} : A \textit{d-folksonomy}.
        \item  \textit{$minsupp_u$}, \textit{$minsupp_t$}, \textit{$minsupp_r$}, \textit{$minsupp_d$} : User-defined thresholds.
\end{enumerate}}

\KwResult{ $\mathcal{QC}$ : \{Frequent quadri-concepts\}.}

    \Begin {

    /*\emph{Step $1$ : The extraction of quadri-generators}*/\\
     \textsc{FindMinimalGenerators($\mathbb{F}_{d}$, $\mathcal{MG}$, \textit{$minsupp_u$});}

/*\emph{Step $2$ : The computation of the modus part}*/\\
\ForEach {quadri-gen $g$ $\in$  $\mathcal{MG}$ } {
\textsc{ClosureCompute}($\mathcal{MG}$, \textit{$minsupp_u$}, \textit{$minsupp_t$}, \textit{$minsupp_r$}, $g$, $\mathcal{QS}$, $1$);}

\textsc{PruneInfrequentSets}($\mathcal{QS}$,\textit{$minsupp_t$});


/*\emph{Step $3$ : The computation of the intent part}*/\\
\ForEach {quadri-set $s$ $\in$  $\mathcal{QS}$ } {
\textsc{ClosureCompute}( $\mathcal{QS}$, \textit{$minsupp_u$}, \textit{$minsupp_t$}, \textit{$minsupp_r$}, $s$, $\mathcal{QS}$, $2$);}

\textsc{PruneInfrequentSets}($\mathcal{QS}$,\textit{$minsupp_r$});


/*\emph{Step $4$ : The computation of the variable part}*/\\
\ForEach { quadri-set $s$ $\in$ $\mathcal{QS}$ } {
\textsc{ClosureCompute}( $\mathcal{QS}$, \textit{$minsupp_u$}, \textit{$minsupp_t$}, \textit{$minsupp_r$}, $s$, $\mathcal{QC}$, $3$);}

\textsc{PruneInfrequentSets}($\mathcal{QC}$,\textit{$minsupp_d$});

} \Return $\mathcal{QC}$ ; }

  \caption{\textsc{QuadriCons}}
  \label{QuadriCons}
\end{algorithm}


\begin{algorithm}[htbp]\label{gen}
{

\KwData { \begin{enumerate}
\item $\mathcal{MG}$ : The set of frequent quadri-generators.
\item $\mathbb{F}_{d}$ {(}$\mathcal{U}$, $\mathcal{T}$, $\mathcal{R}$, $\mathcal{D}$, $\mathrm{Y}${)} : A \emph{d-folksonomy}.
\item \textit{$minsupp_u$} : User-defined threshold of user's support.
\end{enumerate}}

\KwResult{ $\mathcal{MG}$ : \{The set of frequent quadri-generators\}.}

    \Begin {
    \ForEach{triple ($t$, $r$, $d$) of $\mathbb{F}_{d}$}{
    $U_{s}$= \{$u_{i}$ $\in$ $\mathcal{U}$ $\mid$ ($u_{i}$, $t$, $r$, $d$) $\in$   $\mathrm{Y}$\} ;\\

      \If {$|$ $U_{s}$ $|$ $\geq$ \textit{$minsupp_u$}} {$g.extent$ = $U_{s}$; $g.intent$ = $r$; $g.modus$ = $t$; $g.variable$ = $d$\\
          \If{$g$ $\not \in$ $\mathcal{MG}$}{ \textsc{AddQuadri}($\mathcal{MG}$, $g$)}}

          }

}
  \Return $\mathcal{MG}$ ;}

  \caption{\textsc{FindMinimalGenerators }}
  \label{findmg}
\end{algorithm}


\begin{algorithm}[!htbp]\label{findmix}
{

\KwData { \begin{enumerate}
\item $\mathcal{S}_{IN}$ : The input set.
\item \textit{$min_u$}, \textit{$min_t$}, \textit{$min_r$} : User-defined thresholds.
\item $q$ : A quadri-generator/quadri-set.
\item $\mathcal{S}_{OUT}$ : The output set.
\item \textit{i} : an indicator.
\end{enumerate}}

\KwResult{ $\mathcal{S}_{OUT}$ : The output set.}

    \Begin {

\ForEach { quadri-set $q'$ $\in$ $\mathcal{S}_{IN}$} {

\If{\emph{i}=$1$ \textbf{and} $q.intent$ = $q'.intent$ \textbf{and} $q.extent$ $\subseteq$ $q'.extent$}
{
$s.intent$ = $q.intent$;$s.extent$ = $q.extent$;$s.variable$ = $q.variable$;$s.modus$ = $q.modus$ $\cup$ $q'.modus$; \textsc{AddQuadri}($\mathcal{S}_{OUT}$, $s$);
}

\ElseIf{\emph{i}=$1$ \textbf{and} $q.intent$ = $q'.intent$ \textbf{and} $q$ and $q'$ \emph{incomparable}}
{
$g.extent$ = $q.extent$ $\cap$ $q'.extent$; $g.modus$ = $q.modus$ $\cup$ $q'.modus$; $g.intent$ = $q.intent$; $g.variable$ = $q.variable$;\\  \textbf{If} $g$ \emph{u-frequent} \textbf{then} \textsc{AddQuadri}($\mathcal{MG}$, $g$);
}

\ElseIf{\emph{i}=$2$ \textbf{and} $q.extent$ $\subseteq$ $q'.extent$ \textbf{and} $q.modus$ $\subseteq$ $q'.modus$ \textbf{and} $q.intent$ $\neq$ $q'.intent$ }
{
$qs.extent$ = $q.extent$;
$qs.modus$ = $q.modus$;
$qs.variable$ = $q.variable$;
$qs.intent$ = $q.intent$ $\cup$ $q'.intent$;\\
\textsc{AddQuadri}($\mathcal{S}_{OUT}$, $qs$);
}

\ElseIf{\emph{i}=$2$ \textbf{and} $q$ and $q'$ \emph{incomparable}}
{
$s.extent$ = $q.extent$ $\cap$ $q'.extent$; $s.modus$ = $q.modus$ $\cap$ $q'.modus$; $s.variable$ = $q.variable$; $s.intent$ = $q.intent$ $\cup$ $q'.intent$;

\textbf{If} $s$ is \emph{u-frequent} and \emph{t-frequent} \textbf{then} \textsc{AddQuadri}($\mathcal{S}_{OUT}$, $s$);
}

\ElseIf{\emph{i}=$3$ \textbf{and} $q.extent$ $\subseteq$ $q'.extent$ \textbf{and} $q.modus$ $\subseteq$ $q'.modus$ \textbf{and} $q.intent$ $\subseteq$ $q'.intent$ \textbf{and} $q.variable$ $\neq$ $q'.variable$ }
{
$qc.extent$ = $q.extent$;
$qc.modus$ = $q.modus$;
$qc.intent$ = $q.intent$;
$qc.variable$ = $q.variable$ $\cup$ $q'.variable$;\\
\textsc{AddQuadri}($\mathcal{S}_{OUT}$, $qc$);
}

\ElseIf{\emph{i}=$3$ \textbf{and} $q$ and $q'$ \emph{incomparable}}
{
$s.extent$ = $q.extent$ $\cap$ $q'.extent$; $s.modus$ = $q.modus$ $\cap$ $q'.modus$; $s.intent$ = $q.intent$ $\cap$ $q'.intent$;  $s.variable$ = $q.variable$ $\cup$ $q'.variable$;

\textbf{If} $s$ is \emph{u-frequent}, \emph{t-frequent} and \emph{r-frequent} \textbf{then} \textsc{AddQuadri}($\mathcal{S}_{OUT}$, $s$);
}

}

}  \Return $\mathcal{S}_{OUT}$ ; }

  \caption{\textsc{ClosureCompute}}
  \label{findrs}
\end{algorithm}


Hereafter, \textsc{QuadriCons} invokes the \textsc{ClosureCompute} procedure (Step $2$) for each quadri-generator of $\mathcal{MG}$ (Lines $5$-$7$), which pseudo-code is given by Algorithm \ref{findmix} : the aim is to compute the \textit{modus} part of each quadri-concept. At this step, the two first cases of Algorithm \ref{findmix} (Lines $3$ and $6$) have to be considered \emph{w.r.t} the \textit{extent} of each quadri-generator. The \textsc{ClosureCompute} procedure returns the set $\mathcal{QS}$ formed by quadri-sets. The indicator \textit{flag} (equal to $1$ here) marked by \textsc{QuadriCons} shows if the quadri-set considered by the \textsc{ClosureCompute} procedure is a quadri-generator. In the third step, \textsc{QuadriCons} invokes a second time the \textsc{ClosureCompute} procedure for each quadri-set of $\mathcal{QS}$ (Lines $9$-$11$), in order to compute the \emph{intent} part. \textsc{ClosureCompute} focuses on quadri-sets of $\mathcal{QS}$ having different \textit{intent} parts (Algorithm \ref{findmix}, Line $10$). The fourth and final step of \textsc{QuadriCons} invokes a last time the \textsc{ClosureCompute} procedure with an indicator equal to $3$. This will allow to focus on quadri-sets having different \textit{variable} parts (Algorithm \ref{findmix}, Line $18$) before generating quadri-concepts. \textsc{QuadriCons} comes to an end after this step and returns the set of the frequent quadri-concepts which fulfils the four thresholds \emph{$minsupp_u$}, \emph{$minsupp_t$}, \emph{$minsupp_r$} and \emph{$minsupp_d$}. The \textsc{QuadriCons} algorithm invokes the \textbf{\textsc{PruneInfrequentSets}} function (Lines $8$, $13$ and $18$) in order to prune infrequent quadri-sets/concepts, \emph{i.e.}, whose the \emph{modus/intent/variable} cardinality does not fulfil the aforementioned thresholds.

\subsection{Structural properties of \textsc{QuadriCons}}

\begin{proposition}\label{correct}
The \textsc{QuadriCons} algorithm is correct and complete. It retrieves accurately all the frequent quadri-concepts.
\end{proposition}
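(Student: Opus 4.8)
The plan is to split Proposition~\ref{correct} into two complementary claims --- \textbf{soundness} (every quadruple returned lies in $\mathcal{QC}$, \emph{i.e.,} is a frequent quadri-concept) and \textbf{completeness} (every frequent quadri-concept is returned) --- and to reduce both to a single technical lemma: that the three successive invocations of \textsc{ClosureCompute} applied to a seed $g$ produced in Step~$1$ compute exactly the closure $h(g)$ of Definition~\ref{opferm}. Once this lemma is available, soundness is almost immediate and completeness follows from the generator theory of Definition~\ref{min gen}.

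For soundness, I would first observe that \textsc{FindMinimalGenerators} only emits seeds $(U_s,\{t\},\{r\},\{d\})$ in which $U_s$ is \emph{maximal} for the quadri-set relation and $|U_s|\geq \mathit{minsupp}_u$; this is precisely the hypothesis under which the closure operator of Definition~\ref{opferm} is defined, and it is also the first frequency constraint. Granting the key lemma, each surviving pipeline output equals $h(g)$ for such a seed, hence is a fixpoint of $h$; by Proposition~\ref{propclosure} and Definition~\ref{qc} a fixpoint of $h$ is exactly a maximal quadri-set, \emph{i.e.,} a quadri-concept. The three \textsc{PruneInfrequentSets} calls then discard exactly those candidates violating $|T|\geq \mathit{minsupp}_t$, $|R|\geq \mathit{minsupp}_r$ or $|D|\geq \mathit{minsupp}_d$, so what remains satisfies all four thresholds of Problem~\ref{pb2} and therefore belongs to $\mathcal{QC}$.

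For completeness, let $qc=(U,T,R,D)$ be any frequent quadri-concept. Every witness triple $(t,r,d)$ with $t\in T$, $r\in R$, $d\in D$ yields in Step~$1$ an elementary seed $(U_s,\{t\},\{r\},\{d\})$ whose maximal extent $U_s$ contains $U$; moreover $U=\bigcap U_s$ over these witnesses (by maximality of the concept together with Lemma~\ref{lemme}), and each such seed is frequent since $U\subseteq U_s$ forces $|U_s|\geq \mathit{minsupp}_u$. The successive \textsc{ClosureCompute} passes then (i) intersect the extents of incomparable seeds to recover exactly $U$ while unioning their tags, resources and dates, and (ii) accumulate the maximal modus, intent and variable, so that --- by the key lemma --- the element $h(g)=qc$ is produced. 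Since $qc$ meets all four thresholds it survives every \textsc{PruneInfrequentSets} call, and hence appears in the output.

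The main obstacle is the key lemma itself, and it has two ingredients. The first is \emph{alignment}: the closure operator of Definition~\ref{opferm} is already \emph{staged} --- $T$ is computed relative to $(U,C,E)$, then $R$ relative to $(U,T,E)$, then $D$ relative to $(U,T,R)$ --- so I would match Step~$2$, Step~$3$ and Step~$4$ of \textsc{QuadriCons} to these three clauses in order, invoking the monotonicity of the quadratic Galois connection (Lemma~\ref{lemme}) to justify that treating the dimensions in this sequence loses nothing. The second and harder ingredient is \emph{confluence}: within each step \textsc{ClosureCompute} builds the maximal set of one dimension not in one shot but by accumulating pairwise unions over comparable quadri-sets sharing a fixed projection, while the incomparable cases spawn auxiliary generators through extent and modus intersections. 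I would therefore prove, by induction on the working set with an explicit loop invariant, that iterating these pairwise merges over all pairs reaches the \emph{unique} maximum guaranteed by the closure --- essentially a fixpoint argument establishing that the pairwise operations are monotone and that their closure under the loop coincides with the set-builder definition of $h$. Showing that this accumulation is genuinely exhaustive --- that no maximal element of a class is missed and no spurious, non-closed element survives pruning --- is where the real work lies.
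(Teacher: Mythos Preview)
Your proposal is sound in spirit and considerably more structured than the paper's own argument, so the comparison is worth making explicit. The paper does not split into soundness and completeness, nor does it isolate a ``key lemma'' equating the three \textsc{ClosureCompute} passes with the staged closure $h$ of Definition~\ref{opferm}. Instead, the paper's proof is a short operational walkthrough: it first asserts that \textsc{FindMinimalGenerators} extracts all quadri-generators because every triple $(t,r,d)$ of the context is enumerated and the maximal user set is formed for each; it then observes that in each of the three calls to \textsc{ClosureCompute}, for any pair of candidates $q,q'$ exactly two cases arise (comparable, whence a union on the relevant coordinate is taken; incomparable, whence a new candidate is spawned by intersecting extents and, later, modi/intents), and declares that ``all cases of comparison between candidates are enumerated''; finally it notes that \textsc{PruneInfrequentSets} removes the infrequent survivors. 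That is the entire argument.

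What your decomposition buys is clarity about where the real difficulty lies: you correctly flag \emph{confluence} --- that iterating pairwise unions/intersections over the working set actually stabilises at the closure $h(g)$ rather than at some proper subset or some non-closed quadruple --- as the crux, and you propose a loop-invariant/fixpoint argument to discharge it. The paper's proof simply does not engage with this point; it treats the exhaustiveness of the case split as self-evidently implying exhaustiveness of the output. Conversely, what the paper's approach buys is brevity: by staying at the level of ``all triples are visited'' and ``both comparability cases are handled'', it avoids formalising the invariant you would need. If you carry your plan through, you will have a genuinely stronger justification than the one in the paper; just be aware that the staged definition of $h$ keeps the extent fixed, so your completeness step must also verify that the incomparable-case intersections in Step~$2$ really regenerate every extent $U$ of a frequent concept (you gesture at this with $U=\bigcap U_s$, which is the right idea).
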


\begin{proof}
The \textsc{FindMinimalGenerators} procedure allows to extract all quadri-generators from the \emph{d-folksonomy} $\mathcal{F}_{d}$ since all the context's triples are enumerated in order to group maximal users \emph{w.r.t} each triple ($t$,$r$,$d$) (Algorithm $2$, Lines $2$-$10$). This allows to extract accurately all the quadri-generators. From quadri-generators already extracted, \textsc{QuadriCons} calls the \textsc{ClosureCompute} procedure three times in order to compute, respectively, the \emph{modus}, \emph{intent} and \emph{variable} parts of each quadri-generator. At each call, \emph{i.e.,} $i$ = $1$, $2$, $3$, for each couple of candidates $q$ and $q'$, two cases have to be considered :

\begin{enumerate}
\item  (Algorithm $3$, lines $3$, $10$, $18$) $q$ and $q'$ are comparable. Hence a quadri-set (quadri-concept when $i$ = $3$) is created from the union of different parts of both candidates.

\item  (Algorithm $3$, lines $6$, $14$, $22$) $q$ and $q'$ are incomparable. Hence, a new quadri-set (quadri-generator when $i$ = $1$) is created matching the different parts of $q$ and $q'$.

\end{enumerate}

Thus, all cases of comparison between candidates are enumerated. Finally, the \textsc{PruneInfrequentSets} procedure prune infrequent quadri-concepts \emph{w.r.t} minimum thresholds (Algorithm $1$, lines $8$, $13$ and $18$). We conclude that \textsc{QuadriCons} faithfully extracts all frequent quadri-concepts. So, it is correct.
\end{proof}

\begin{proposition}
\label{termi}
The \textsc{QuadriCons} algorithm terminates.
\end{proposition}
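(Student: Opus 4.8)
The plan is to argue termination by bounding every loop of the three algorithms by the size of a finite ambient universe of quadri-sets. First I would invoke Definition \ref{dfolk}, which guarantees that $\mathcal{U}$, $\mathcal{T}$, $\mathcal{R}$ and $\mathcal{D}$ are finite. Consequently the collection of all quadri-sets $(A,B,C,E)$ with $A \subseteq \mathcal{U}$, $B \subseteq \mathcal{T}$, $C \subseteq \mathcal{R}$ and $E \subseteq \mathcal{D}$ is itself finite, of cardinality at most $2^{|\mathcal{U}|} \cdot 2^{|\mathcal{T}|} \cdot 2^{|\mathcal{R}|} \cdot 2^{|\mathcal{D}|}$. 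Every set manipulated by the algorithm, namely $\mathcal{MG}$, $\mathcal{QS}$ and $\mathcal{QC}$, is a subset of this universe, so each of them can contain only finitely many distinct elements.

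Next I would dispatch the two straightforward procedures. The \textsc{FindMinimalGenerators} procedure (Algorithm $2$) consists of a single \textbf{foreach} loop ranging over the triples $(t,r,d)$ of $\mathbb{F}_{d}$; since there are exactly $|\mathcal{T}| \cdot |\mathcal{R}| \cdot |\mathcal{D}|$ such triples and each iteration performs only finite set operations followed by at most one call to \textsc{AddQuadri}, it halts. Likewise, each individual invocation of \textsc{ClosureCompute} (Algorithm $3$) is a single \textbf{foreach} loop over the finite input set $\mathcal{S}_{IN}$, and every branch of its conditional performs a bounded number of finite set operations and at most one \textsc{AddQuadri}; hence each call returns after $|\mathcal{S}_{IN}|$ iterations. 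Finally, \textsc{PruneInfrequentSets} only removes elements from a finite set and therefore terminates trivially.

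The hard part will be the three \textbf{foreach} loops of \textsc{QuadriCons} itself (Algorithm $1$, Steps $2$--$4$), because \textsc{ClosureCompute} may insert new elements into the very collection being iterated: in the incomparable case for $i=1$ it calls \textsc{AddQuadri}($\mathcal{MG}$, $g$), and similarly the output sets $\mathcal{QS}$ and $\mathcal{QC}$ are augmented while they are, or will be, traversed. To handle this I would rely on two facts working together: first, \textsc{AddQuadri} inserts a quadruple only when it is not already present (the guard $g \notin \mathcal{MG}$ of Algorithm $2$ and the analogous duplicate test), so every insertion strictly increases the cardinality of a set; second, that cardinality is capped by the finite universe established above. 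Therefore only finitely many insertions can ever occur across the whole execution, and each \textbf{foreach} can enumerate only finitely many elements before exhausting its collection. Combining the termination of the two sub-procedures with this bounded-growth argument for the outer loops yields that \textsc{QuadriCons} performs finitely many steps in total, and thus it terminates.
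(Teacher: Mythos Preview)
Your argument is correct and follows the same finiteness strategy as the paper: bound the number of quadri-generators and quadri-concepts by the size of a finite ambient universe, and conclude that every loop exhausts its range. The paper's own proof is terser---it simply states that at most $|\mathcal{T}|\times|\mathcal{R}|\times|\mathcal{D}|$ quadri-generators arise and that the total number of quadri-concepts is bounded by $2^{|\mathcal{T}|+|\mathcal{R}|+|\mathcal{D}|}$, then declares the three outer loops finite. Your version is strictly more careful: you explicitly confront the fact that \textsc{ClosureCompute} may enlarge $\mathcal{MG}$ (and $\mathcal{QS}$) while those sets are being traversed, and you resolve it with a monotone-growth-under-a-finite-cap argument, something the paper's proof silently glosses over. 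One minor point: the same growth can occur inside the inner \textbf{foreach} of \textsc{ClosureCompute} when $\mathcal{S}_{IN}=\mathcal{MG}$, so your claim that ``each call returns after $|\mathcal{S}_{IN}|$ iterations'' is not literally true there---but your bounded-growth argument in the next paragraph already covers that case as well, so the overall proof stands.
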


\begin{proof}
The number of quadri-generators generated by \textsc{QuadriCons} is finite. Indeed, the number of \emph{QGs} candidate generated from a context ($\mathcal{U}$, $\mathcal{T}$, $\mathcal{R}$, $\mathcal{D}$) is at most $|\mathcal{T}|\times|\mathcal{R}|\times|\mathcal{D}|$. Since the set $\mathcal{MG}$ of quadri-generators is finite, the three loops of Algorithm $1$ running this set are thus finite. Moreover, the total number of quadri-concepts generated by \textsc{QuadriCons} is equal to $2^{|\mathcal{T}|+|\mathcal{R}|+|\mathcal{D}|}$ Therefore, the algorithm \textsc{QuadriCons} terminates.
\end{proof}

\medskip
\textbf{Theoretical Complexity issues:}
As in the triadic case \cite{Jaschke08}, the number of (frequent) quadri-concepts may grow exponentially in the worst case. Hence, the theoretical complexity of our algorithm is around $\mathcal{O}$($2^{n}$) with $n$ =  $|\mathcal{T}|+|\mathcal{R}|+|\mathcal{D}|$. Nevertheless, and as it will be shown in the section dedicated to experimental results, from a practical point of view, the actual performances are far from being exponential and \textsc{QuadriCons} flags out the desired scalability feature.  Therefore we focus on empirical evaluations on large-scale real-world datasets.

\subsection{Illustrative example}

Consider the \emph{d-folksonomy} depicted by Table \ref{quadri}, with \textit{$minsupp_u$} = $2$, \textit{$minsupp_t$} = $2$, \textit{$minsupp_r$} = $1$ and \textit{$minsupp_d$} = $1$. Figure \ref{exec} sketches the execution trace of \textsc{QuadriCons} above this context. As described above, \textsc{QuadriCons} operates in four steps :

\begin{enumerate}
  \item \emph{(Step $1$)} The first step of \textsc{QuadriCons} involves the extraction of quadri-generators (\emph{QGs}) from the context (Algorithm $1$, Line $3$). \emph{QGs} are maximal sets of users following a triple of tag, resource and date. Thus, eleven \emph{QGs} (among twelve) fulfill the minimum threshold \textit{$minsupp_u$} (\emph{cf.,} Figure \ref{exec}, Step $1$).

   \item  \emph{(Step $2$)} Next, \textsc{QuadriCons} invokes the \textsc{ClosureCompute} procedure a first time on the quadri-generators allowing the computation of the \textit{modus} part (the set of tags) of such candidates (Algorithm $1$, Lines $5$-$8$). For example, since the \emph{extent} part (the set of users) of \{\{$u_{1}$, $u_{2}$, $u_{4}$\}, $t_{1}$, $r_{1}$, $d_{1}$\} is included into that of \{\{$u_{1}$, $u_{2}$, $u_{3}$, $u_{4}$\}, $t_{2}$, $r_{1}$, $d_{1}$\}, the \emph{modus} part of the first \emph{QG} will be equal to \{$t_{1}$, $t_{2}$\}. In addition, new \emph{QGs} can be created from intersection of the first ones (Algorithm \ref{findrs}, Lines $6$-$9$) : it is the case of the two \emph{QGs} \emph{(a)} and \emph{(b)} (\emph{cf.}, Figure \ref{exec}, Step $2$). Finally, candidates that not fulfill the minimum threshold \textit{$minsupp_t$} are pruned (\emph{cf.}, the three last ones).

  \item  \emph{(Step $3$)} Then, \textsc{QuadriCons} proceeds to the computation of the \textit{intent} part (the set of resources) of each candidate within a second call to the \textsc{ClosureCompute} procedure (Algorithm $1$, Lines $10$-$13$). For example, the candidate \{\{$u_{1}$, $u_{2}$, $u_{4}$\}, \{$t_{1}$, $t_{2}$\}, $r_{1}$, $d_{1}$\} has an \emph{extent}, \emph{modus} and \emph{variable} included or equal into those of the candidate \{\{$u_{1}$, $u_{2}$, $u_{4}$\}, \{$t_{1}$, $t_{2}$\}, $r_{2}$, $d_{1}$\}. Then, its \emph{intent} will be equal to \{$r_{1}$, $r_{2}$\}. At this step, four candidates fulfill the minimum threshold \textit{$minsupp_r$} (\emph{cf.}, Figure \ref{exec}, Step $3$). By merging comparable candidates, this step allow reducing at the same time their number.

  \item  \emph{(Step $4$)} Via a last call to the \textsc{ClosureCompute} procedure, \textsc{QuadriCons} computes the \textit{variable} part (the set of dates) of each candidate while pruning infrequent ones (Algorithm $1$, Lines $15$-$18$). For example, since the candidate \{\{$u_{1}$, $u_{2}$\}, \{$t_{1}$, $t_{2}$\}, $r_{1}$, $d_{2}$\} has an \emph{extent}, \emph{modus} and \emph{intent} included into those of \{\{$u_{1}$, $u_{2}$, $u_{4}$\}, \{$t_{1}$, $t_{2}$\}, \{$r_{1}$, $r_{2}$\}, $d_{1}$\}\footnote{Concretely, it means that the users $u_{1}$ and $u_{2}$ who shared the resource $r_{1}$ with the tags $t_{1}$ and $t_{2}$ at the date $d_{2}$ also shared it at the date $d_{1}$.}, its \emph{variable} will be equal to \{$d_{1}$, $d_{2}$\} (\emph{cf.}, Figure \ref{exec}, Step $4$).
\end{enumerate}

 After the Step $4$, \textsc{QuadriCons} terminates. The four frequent quadri-concepts given as output are :
 \begin{enumerate}
  \item \{\{$u_{1}$, $u_{2}$, $u_{4}$\}, \{$t_{1}$, $t_{2}$\}, \{$r_{1}$, $r_{2}$\}, $d_{1}$\}
  \item \{\{$u_{1}$, $u_{3}$, $u_{4}$\}, \{$t_{2}$, $t_{3}$\}, \{$r_{1}$, $r_{2}$\}, $d_{1}$\}
  \item \{\{$u_{1}$, $u_{4}$\}, \{$t_{1}$, $t_{2}$, $t_{3}$\}, \{$r_{1}$, $r_{2}$\}, $d_{1}$\}
  \item \{\{$u_{1}$, $u_{2}$\}, \{$t_{1}$, $t_{2}$\}, $r_{1}$, \{$d_{1}$, $d_{2}$\}\}
\end{enumerate}

\begin{figure}[htbp]
\begin{center}
\includegraphics[scale=0.6]{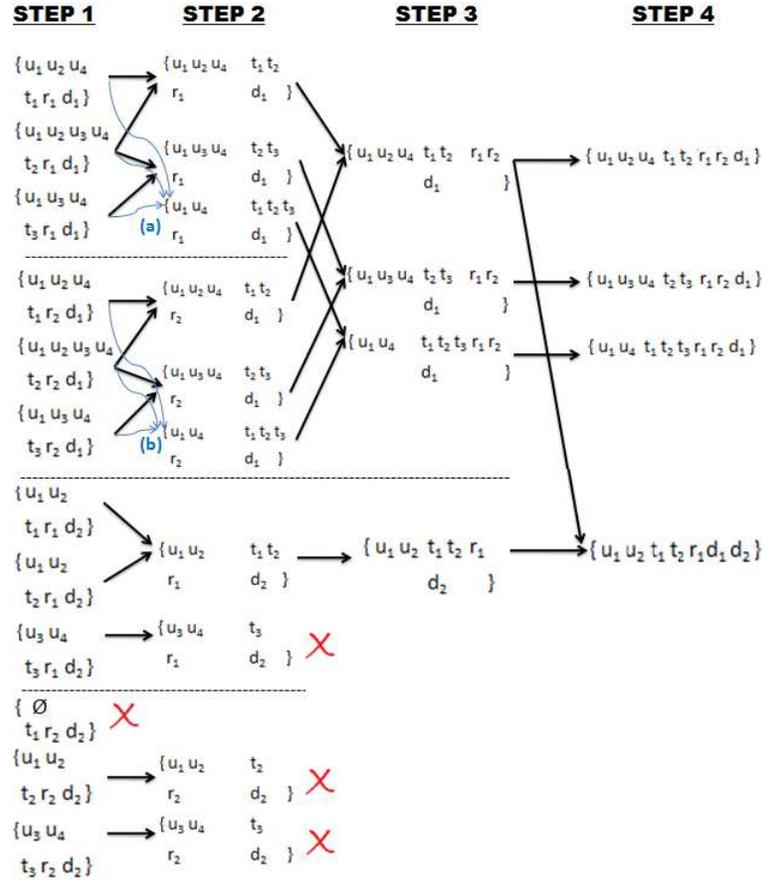}
\end{center}
\caption{Execution trace of \textsc{QuadriCons} above the \emph{d-folksonomy} depicted by Table \ref{quadri}}\label{exec}
\end{figure}


\section{Evaluation and Discussion}\label{sec6}
In this section, we show through extensive carried out experiments, the assessment of the \textsc{QuadriCons} performances \emph{vs.} the state-of-the-art \textsc{Data-Peeler} algorithm in terms of execution time\footnote{All implemented algorithms are in \emph{C++} (compiled with \emph{GCC} $4$.$1$.$2$) and we used an Intel$®$ Core$™$ $i7$ CPU system with $4$ GB RAM. Tests were carried out on the Linux operating system \textsc{Ubuntu} $10$.$10$.$1$.}. We also put the focus on the differences between the consumed memory of both algorithms. Finally, we compare the number of frequent quadri-concepts versus the number of frequent quadri-sets in order to assess the compacity of the extracted representation. We have applied our experiments on two real-world datasets described in the following. Both datasets \cite{movie} are freely downloadable\footnote{\emph{http://movielens.org}} and statistics about these snapshots are summarized into Table \ref{Statsdel}.

\begin{itemize}
\item \textsc{MovieLens} (\emph{http://movielens.org}) is a movie recommendation website. Users are asked to annotate movies they like and dislike. Quadruples are sets of users sharing movies using tags at different dates.
\item \textsc{Last.fm} (\emph{http://last.fm}) is a music website, founded in $2002$. It has claimed $30$ million active users in March $2009$. Quadruples are sets of users annotating artists through tags at different dates.
\end{itemize}

\begin{table}[htbp]
\begin{center}
{\small
\begin{tabular}{|r|r|r|}
\hline

           &   Dataset $1$ & Dataset $2$\\
\hline
       &          (\textsc{MovieLens}) & (\textsc{Last.fm})\\
\hline
  \textbf{Type} &     Dense    & Sparse\\
\hline
$\#$ \textbf{Quadruples}&        $95580$ & $186479$\\
\hline
$\#$ \textbf{Users} &          $4010$ & $1892$\\
\hline
     $\#$ \textbf{Tags} &          $15227$ & $9749$ \\
\hline
 $\#$ \textbf{Resources} &      $11272$ (movies)   & $12523$ (artists)\\
\hline
 $\#$ \textbf{Dates (timestamps)}&      $81601$    & $3549$ \\
\hline
  \textbf{Periods} &      $12/01/2005$ -     & $10/01/2007$ -  \\
 &      $20/12/2008$    & $07/08/2011$ \\
\hline
\end{tabular}}
\end{center}
\caption{Characteristics of the considered snapshots.\label{Statsdel}}
\end{table}

\begin{table}[htbp]
\begin{center}
{\small
\begin{tabular}{|l|l|l|l|l|}
  \hline
\textbf{Datasets} & \textbf{Dates} & \textbf{Users} & \textbf{Tags} & \textbf{Resources}\\
  \hline
   & $03/12/05$ &  &  \emph{kids} & \emph{Harry Potter} \\

   & & \emph{krycek} & \emph{fantasy} & \emph{The Prisoner} \\
  \textsc{Movie} &  $16/07/06$ &  &  \emph{darkness} & \emph{of Azkaban} \\
  \textsc{Lens} & & \emph{maria} & \emph{magic} & \emph{The Order of} \\
     & $21/02/08$   & &  & \emph{the Phoenix} \\

   \hline
   \hline

   & $07/05/10$ & \emph{csmdavis} & \emph{pop} & \emph{Britney Spears} \\
  \textsc{Last.fm} &   & \emph{franny} & \emph{concert} & \emph{Madonna} \\
   & $02/06/11$ & \emph{rossanna} & \emph{dance} & \\

   \hline
\end{tabular}}
\end{center}
\caption{Examples of frequent quadri-concepts of \textsc{MovieLens} and \textsc{Last.fm}.\label{examplesdeqc}}
\end{table}
\vspace{-0.7cm}
\subsection{Examples of quadri-concepts}
Table \ref{examplesdeqc} shows two examples of frequent quadri-concepts extracted from the \textsc{MovieLens} and \textsc{Last.fm} datasets. The first one depicts that the users \emph{krycek} and \emph{maria} used the tags \emph{kids}, \emph{fantasy}, \emph{darkness} and \emph{magic} to annotate the movie \emph{Harry Potter} and its sequels successively in $03/12/2005$, in $16/07/2006$ and then in $21/02/2008$. Such concept may be exploited further for recommending tags for that movie or analyze the evolution of tags associated to \emph{"Harry Potter"}. The second quadri-concept shows that the users \emph{csmdavis}, \emph{franny} and \emph{rossanna} shared the tags \emph{pop}, \emph{concert} and \emph{dance} to describe the artists \emph{Britney Spears} and \emph{Madonna} in $07/05/10$ and then in $02/06/11$. We can use such quadri-concept to recommend the users \emph{franny} and \emph{rossanna} to the first one, \emph{i.e.,} \emph{csmdavis} as they share the same interest for both artists using the same tags. It will be also useful to study the evolution of the artist's \emph{fans} and the vocabulary they used to annotate them through time.



In the following, in order to assess the performances of \textsc{QuadriCons} \emph{vs.} \textsc{Data-Peeler} while extracting quadri-concepts, we ran both algorithms on both datasets and we vary the values of minimum thresholds as depicted by Tables \ref{perfmovie} and \ref{perflast}.

\subsection{Execution Time}
Tables \ref{perfmovie} and \ref{perflast} show the different runtimes of the \textsc{QuadriCons} algorithm \emph{vs.} those of \textsc{Data-Peeler} for the different values of quadruples, which grows from $20000$ to $95580$ for the \textsc{MovieLens} dataset and from $40000$ to $186479$ for the \textsc{Last.fm} dataset, and for different values of minimum thresholds. We can observe that for both datasets and for all values of the number of quadruples,  \textsc{Data-Peeler} algorithm is far away from \textsc{QuadriCons} in terms of execution time. \textsc{QuadriCons} ran until $332$ times faster than \textsc{Data-Peeler} on \textsc{Last.fm} and until $124$ times on \textsc{MovieLens}. Indeed, the poor performance flagged out by \textsc{Data-Peeler}, is explained by the strategy adopted by this later which starts by storing the entire dataset into a \emph{binary tree structure}, which should facilitate its run and then the extraction of quadri-concepts. However, such structure is absolutely not adequate to support a so highly sized data, which is the case of the real-world large-scale datasets considered in our evaluation. Contrariwise, The main thrust of the \textsc{QuadriCons} algorithm stands in the localisation of the quadri-generators (\emph{QGs}), that stand at the "antipodes" of the closures within their respective equivalence classes. Then, in an effort to improve the existing work, our strategy to locate these \emph{QGs} have the advantage of making the extraction of quadri-concepts faster than its competitor. This is even more significant in the case of our real-world datasets where the number of data reaches thousands.

\subsection{Consumed Memory}
Tables \ref{perfmovie} and \ref{perflast} show the memory consumed by both algorithms on both datasets for the different values of quadruples. We observe that \textsc{QuadriCons} consumes memory far below its competitor : less than $40000$ KB and $20000$ KB on both datasets versus millions of KB for \textsc{Data Peeler}. Such difference is explained by the fact that \textsc{QuadriCons}, unlike \textsc{Data Peeler}, does not store the dataset in memory before proceeding the extraction of quadri-concepts. Furthermore, \textsc{QuadriCons} generates fewer candidates thanks to the clever detection of quadri-generators that reduce the search space significantly. For example, to extract the $167$ quadri-concepts from \textsc{Last.fm} when \textit{$minsupp_{u}$} = $3$, \textit{$minsupp_{t}$} = $2$, \textit{$minsupp_{r}$}  = $1$ and \textit{$minsupp_{d}$} = $1$, \textsc{QuadriCons} requires only $1754$ KB in memory while detecting the $939$ quadri-generators of the dataset. However, despite the few number of extracted quadri-concepts, \textsc{Data Peeler} requires $788021$ KB in memory to store the entire dataset before generating candidates. Hence, detecting quadri-generators before extracting quadri-concepts allows \textsc{QuadriCons} consuming until $54$ and $115$ times less memory than \textsc{Data Peeler} on respectively \textsc{MovieLens} and \textsc{Last.fm} datasets.

\begin{table}[h]
\begin{center}
\small
\begin{tabular}{|r||r|r|r|r|}
\hline
 &  {\bf \textsc{Quadri}} & Consumed  & {\bf \textsc{Data}}  & Consumed\\
 $|$ $\mathrm{Y}$ $|$ & {\bf \textsc{Cons}} & Memory & {\bf \textsc{Peeler}} & Memory\\
    & (sec) & (kilobytes) & (sec) & (kilobytes)\\

\hline

\multicolumn{5}{|c|}{\textit{$minsupp_{u}$} = $3$, \textit{$minsupp_{t}$} = $2$,}  \\
\multicolumn{5}{|c|}{\textit{$minsupp_{r}$}  = $1$, \textit{$minsupp_{d}$} = $1$} \\
\hline
\hline
                  $25000$ &      $\textbf{0. 86}$ & $\textbf{542}$    & $43. 10$ & $209843$\\

                 $50000$ &       $\textbf{2. 05}$ & $\textbf{1361}$    & $110. 72$ & $378907$\\

       $70000$ &      $\textbf{3. 08}$ & $\textbf{1760}$    & $198. 33$ & $509541$\\

              $95580$ &      $\textbf{4. 61}$ & $\textbf{2087}$    & $288. 00$ & $654761$\\
\hline
\hline

\multicolumn{5}{|c|}{\textit{$minsupp_{u}$} = $2$, \textit{$minsupp_{t}$} = $2$,}  \\
\multicolumn{5}{|c|}{\textit{$minsupp_{r}$}  = $2$, \textit{$minsupp_{d}$} = $1$} \\
\hline
\hline
                 $25000$ &      $\textbf{0. 36}$ & $\textbf{198}$    & $39. 98$ & $399672$\\

                 $50000$ &       $\textbf{0. 97}$ & $\textbf{431}$    & $107. 71$ & $508943$\\

       $70000$ &      $\textbf{1 .96}$ & $\textbf{567}$    & $227. 65$ & $667006$\\

              $95580$ &      $\textbf{3. 79}$ & $\textbf{1182}$    & $472.87$ & $842551$\\
\hline
\hline

\multicolumn{5}{|c|}{\textit{$minsupp_{u}$} = $2$, \textit{$minsupp_{t}$} = $2$,}  \\
\multicolumn{5}{|c|}{\textit{$minsupp_{r}$}  = $1$, \textit{$minsupp_{d}$} = $1$} \\
\hline
\hline
                  $25000$ &      $\textbf{5.76}$ & $\textbf{2491}$    & $421. 44$ & $769822$\\

                 $50000$ &       $\textbf{15.92}$ & $\textbf{5246}$    & $1269. 70$ & $976200$\\

       $70000$ &      $\textbf{29.22}$ & $\textbf{9845}$    & $2037. 73$ & $1153401$\\

              $95580$ &      $\textbf{48.92}$ & $\textbf{16556}$    & $3478.98$ & $1446242$\\
\hline
\hline

\multicolumn{5}{|c|}{\textit{$minsupp_{u}$} = $2$, \textit{$minsupp_{t}$} = $1$,}  \\
\multicolumn{5}{|c|}{\textit{$minsupp_{r}$}  = $1$, \textit{$minsupp_{d}$} = $1$} \\
\hline
\hline
                 $25000$ &      $\textbf{97. 56}$ & $\textbf{10982}$   & $1022. 12$ & $1272988$\\

                 $50000$ &       $\textbf{188. 61}$ & $\textbf{14671}$    & $1987. 06$ & $1561992$\\

       $70000$ &      $\textbf{263. 63}$ & $\textbf{19548}$    & $2876. 02$ & $1751258$\\

              $95580$ &      $\textbf{528. 58}$ & $\textbf{38762}$    & $5965. 94$ & $2098452$\\
\hline
\hline

\end{tabular}
\end{center}
\caption{Performances of \textsc{QuadriCons} \emph{vs}. \textsc{Data-Peeler} above the \textsc{MovieLens} dataset.}\label{perfmovie}
\end{table}

\begin{table}[h]
\begin{center}
\small
\begin{tabular}{|r||r|r|r|r|}
\hline
 &  {\bf \textsc{Quadri}} & Consumed  & {\bf \textsc{Data}}  & Consumed\\
  $|$ $\mathrm{Y}$ $|$ & {\bf \textsc{Cons}} & Memory & {\bf \textsc{Peeler}} & Memory\\
    & (sec) & (kilobytes) & (sec) & (kilobytes)\\
    \hline
\multicolumn{5}{|c|}{\textit{$minsupp_{u}$} = $3$, \textit{$minsupp_{t}$} = $2$,}  \\
\multicolumn{5}{|c|}{\textit{$minsupp_{r}$}  = $1$, \textit{$minsupp_{d}$} = $1$} \\
\hline
\hline
                       $40000$ &      $\textbf{0. 05}$ & $\textbf{114}$    & $7.13$ & $309453$\\

                 $80000$ &       $\textbf{0. 10}$ & $\textbf{342}$    & $28. 12$ & $445431$\\

       $120000$ &      $\textbf{0. 22}$ & $\textbf{656}$    & $61. 60$ & $550932$\\

              $150000$ &      $\textbf{0. 45}$ & $\textbf{1241}$    & $119.45$ & $678542$\\

               $186479$ &      $\textbf{0. 77}$ & $\textbf{1754}$    & $255. 71$ & $788021$\\
\hline
\hline

\multicolumn{5}{|c|}{\textit{$minsupp_{u}$} = $2$, \textit{$minsupp_{t}$} = $2$,}  \\
\multicolumn{5}{|c|}{\textit{$minsupp_{r}$}  = $2$, \textit{$minsupp_{d}$} = $1$} \\
\hline
\hline
                       $40000$ &      $\textbf{0. 39}$ & $\textbf{177}$    & $32. 29$ & $456323$\\

                 $80000$ &       $\textbf{0. 53}$ & $\textbf{421}$    & $57. 06$ & $590012$\\

       $120000$ &      $\textbf{1. 60}$ & $\textbf{782}$    & $182. 40$ & $698672$\\

              $150000$ &      $\textbf{3. 39}$ & $\textbf{1025}$    & $354. 71$ & $826862$\\

               $186479$ &      $\textbf{5. 87}$ & $\textbf{1672}$    & $496.55$ & $932871$\\
\hline
\hline

\multicolumn{5}{|c|}{\textit{$minsupp_{u}$} = $2$, \textit{$minsupp_{t}$} = $2$,}  \\
\multicolumn{5}{|c|}{\textit{$minsupp_{r}$}  = $1$, \textit{$minsupp_{d}$} = $1$} \\
\hline
\hline
                      $40000$ &      $\textbf{0. 84}$ & $\textbf{1876}$    & $51. 88$ & $498672$\\

                 $80000$ &       $\textbf{2. 94}$ & $\textbf{3891}$    & $201. 58$ & $780762$\\

       $120000$ &      $\textbf{8. 71}$ & $\textbf{6789}$    & $487. 92$ & $1198451$\\

              $150000$ &      $\textbf{17. 81}$ & $\textbf{11342}$    & $1049.34$ & $1343572$\\

               $186479$ &      $\textbf{29. 78}$ & $\textbf{14562}$    & $1949.14$ & $1552789$\\
\hline
\hline

\hline
\multicolumn{5}{|c|}{\textit{$minsupp_{u}$} = $2$, \textit{$minsupp_{t}$} = $1$,}  \\
\multicolumn{5}{|c|}{\textit{$minsupp_{r}$}  = $1$, \textit{$minsupp_{d}$} = $1$} \\
\hline
\hline
                  $40000$ &      $\textbf{2. 91}$ & $\textbf{6724}$   & $89. 77$ & $1008273$\\

                 $80000$ &       $\textbf{6. 87}$ & $\textbf{11562}$    & $221. 93$ & $1336451$\\

       $120000$ &      $\textbf{21. 87}$ & $\textbf{14345}$    & $724. 47$ & $1542006$\\

              $150000$ &      $\textbf{46. 52}$ & $\textbf{15623}$    & $1524. 76$ & $1772919$\\

               $186479$ &      $\textbf{88. 16}$ & $\textbf{18976}$    & $3118. 85$ & $2188452$\\
\hline
\hline

\end{tabular}
\end{center}
\caption{Performances of \textsc{QuadriCons} \emph{vs}. \textsc{Data-Peeler} above the \textsc{Last.fm} dataset.}\label{perflast}
\end{table}

\subsection{Compacity of Quadri-Concepts}
Figure \ref{nadtemps} shows the number of frequent quadri-concepts versus the number of frequent quadri-sets on both \textsc{MovieLens} and \textsc{Last.fm} datasets for the different values of quadruples. We observe that for both datasets, the number of frequent quadri-sets increase massively when the number of quadruples grows. Indeed, frequent quadri-concepts become more large, \emph{i.e.,} containing more users, tags, resources and dates. Thus, such concepts cause the steep increase of frequent quadri-sets. For both datasets, the frequent quadri-concepts represent until $3$. $68$ \% and $28$. $99$ \% of the number of frequent quadri-sets. Hence, computing frequent quadri-sets is a harder task than computing frequent quadri-concepts while providing the same information.

\begin{figure}
\begin{center}
\parbox{6cm}{\includegraphics[scale=0.55]{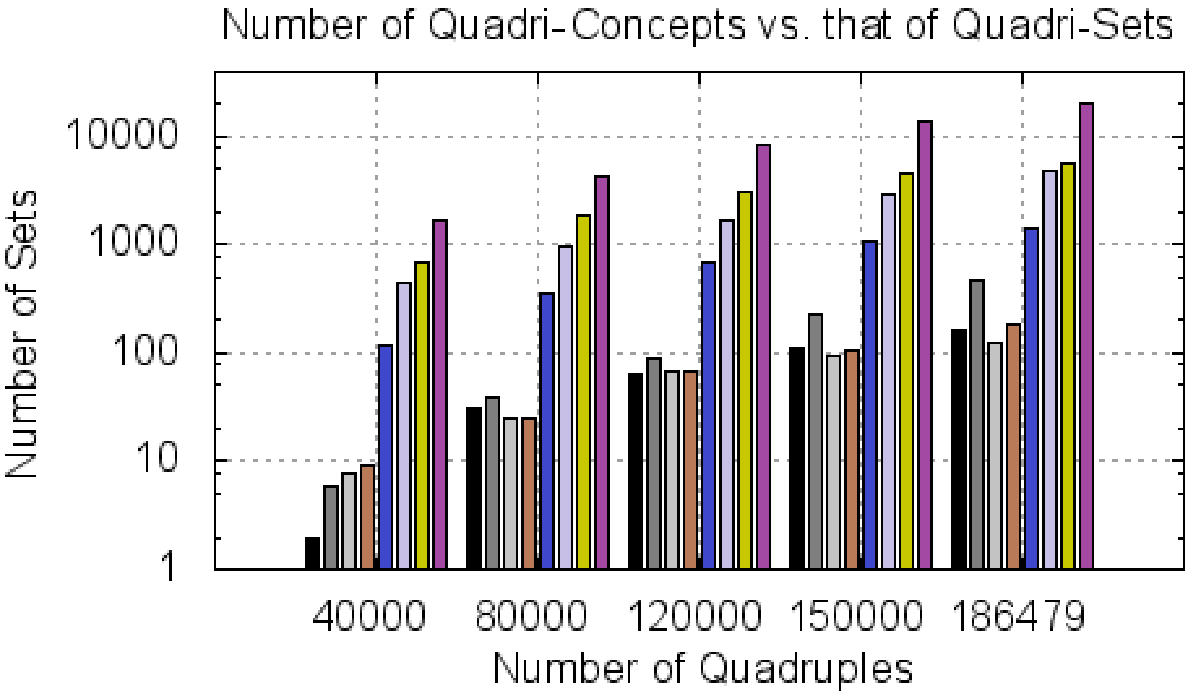}}
\parbox{6cm}{\includegraphics[scale=0.55]{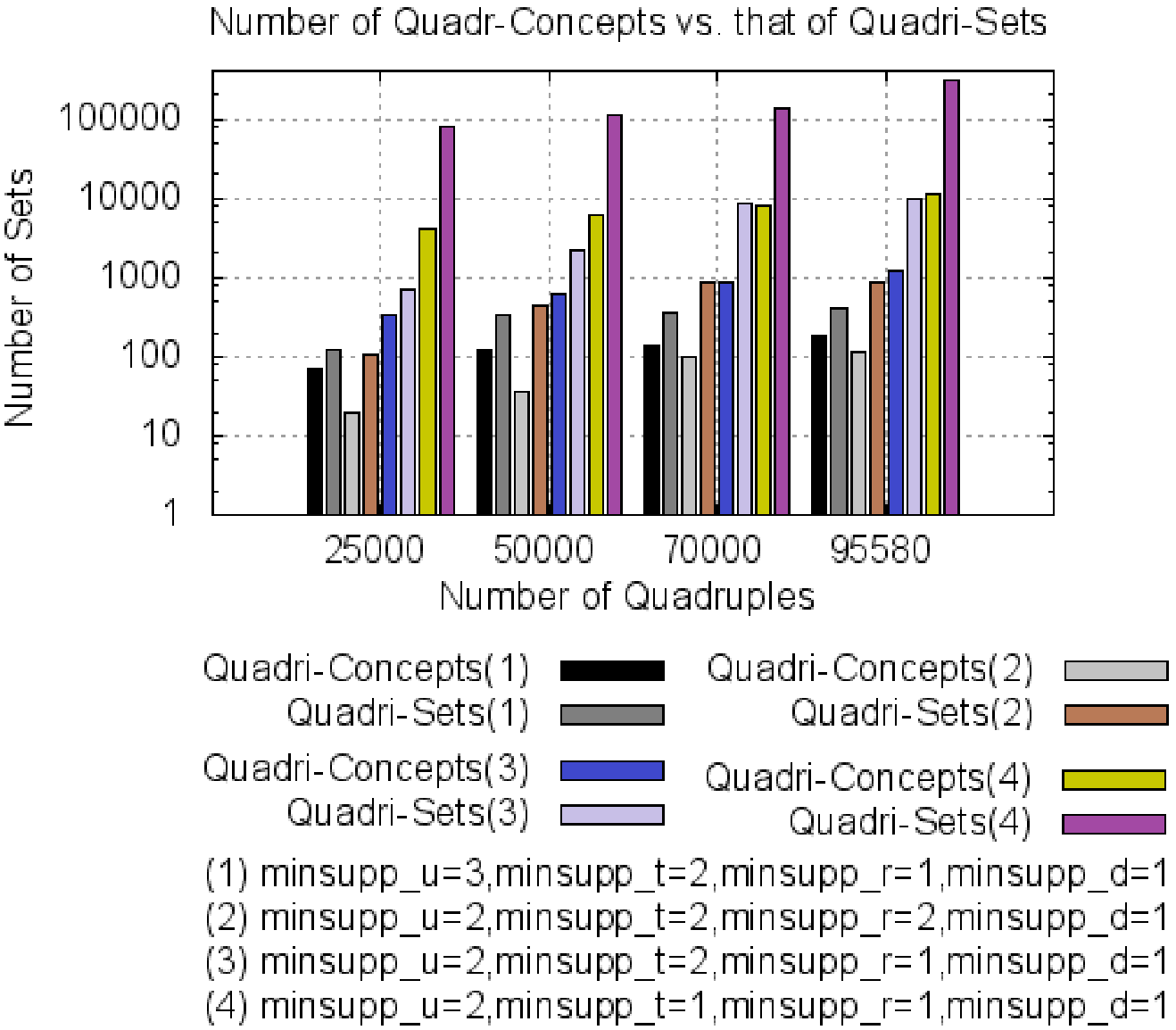}}

\caption{Number of frequent quadri-concepts \emph{vs.} number of frequent quadri-sets on both datasets. (\textbf{Top}) \textsc{Last.fm} (\textbf{Bottom}) \textsc{MovieLens}}
\label{nadtemps}
\end{center}
\end{figure}

\section{Conclusion and Perspectives}\label{sec7}
In this paper, we considered the quadratic context formally described by a \emph{d-folksonomy} with the introduction of a new dimension : time stamp. Indeed, we extend the notion of closure operator and tri-generator to the four-dimensional case and we thoroughly studied their theoretical properties. Then, we proposed the \textsc{Quadricons} algorithm in order to extract frequent quadri-concepts from \emph{d-folksonomies}. Several experiments show that \textsc{Quadricons} provides an efficient method for mining quadri-concepts in large scale conceptual structures. It is important to highlight that mining quadri-concepts stands at the crossroads of the avenues for future work : \emph{(i)} analyse evolution of users, tags and resources through time, \emph{(ii)} define the quadratic form of association rules according to quadri-concepts.

\bibliographystyle{IEEEtran}
\bibliography{biblio}

\end{document}